\newcommand{\blind}{1}
\newtheorem{theorem}{Theorem}[section]
\newtheorem{lemma}[theorem]{Lemma}
\newtheorem{corollary}[theorem]{Corollary}
\newtheorem{proposition}[theorem]{Proposition}
\newtheorem{remark}[theorem]{Remark}
\newtheorem{assumption}[theorem]{Assumption}
\newtheorem{example}[theorem]{Example}
\begin{document}

\def\spacingset#1{\renewcommand{\baselinestretch}%
{#1}\small\normalsize} \spacingset{1}


\if1\blind
{
  \title{\bf Adversarially  Perturbed Precision Matrix Estimation}
  \author{Yiling Xie\\
    Department of Decision Analytics and Operations,\\  City University of Hong Kong.}
  \maketitle
} \fi

\if0\blind
{
  \bigskip
  \bigskip
  \bigskip
  \begin{center}
    {\LARGE\bf Adversarially Robust Precision Matrix Estimation}
\end{center}
  \medskip
} \fi

\bigskip

\sloppy 
\begin{abstract}
Precision matrix estimation is a fundamental topic in multivariate statistics and
modern machine learning. This paper proposes an adversarially perturbed precision
matrix estimation framework, motivated by recent developments in adversarial training.
The proposed framework is versatile for the precision matrix problem since, by adapting
to different perturbation geometries, the proposed framework can not only recover
the existing distributionally robust method but also achieve high-dimensional model selection consistency under the scale-adaptive incoherence condition, which can be viewed as a relaxation of the classic incoherence condition in the heteroscedastic settings.
Additionally, the proposed perturbed precision matrix estimation
framework is asymptotically equivalent to the regularized precision matrix estimation,
and the asymptotic normality can be established accordingly, where the asymptotic bias introduced by perturbation is highlighted. 
Numerical experiments demonstrate the
desirable practical performance of the proposed adversarially perturbed approach.
\end{abstract}

\spacingset{1.9}
\section{Introduction}\label{sectionintro}
The precision matrix, defined as the inverse of the covariance matrix, characterizes the conditional dependence structure of the data, where a zero off-diagonal entry implies that two variables are conditionally independent given all other variables \citep{fan2016overview}.
Estimating the precision matrix is essential in many problems, including neuroimaging \citep{huang2010learning}, network analysis \citep{torri2018robust}, portfolio optimization \citep{procacci2022portfolio}, and genomics \citep{schafer2005shrinkage}, among others.

Suppose $\bm{x}^1, \ldots, \bm{x}^n$ are $n$ independent and identically distributed samples drawn from a $d$-dimensional multivariate normal distribution. For simplicity, we assume that the population mean is zero and the population covariance matrix $\Sigma$ is positive definite and unknown.
The standard approach to obtain the precision matrix estimation, i.e., $\Sigma^{-1}$, is to identify the minimizer of 
the negative log-likelihood function:  
\begin{equation}\label{orig}\min_{ C\succ 0} -\log \mathrm{det}  C + \mathbb{E}_{\mathbb{P}_n} [\bm{x}^\top  C \bm{x}],\end{equation}
where $ C\succ 0$ means that $ C$ is positive definite, $\mathrm{det} C$ denotes the determinant of the matrix $ C$, $\mathbb{P}_n$ is the empirical distribution.
When the matrix dimension is small such that $d<n$, the solution to problem \eqref{orig} is the inverse of the sample covariance matrix $\mathbb{E}_{\mathbb{P}_n} [\bm{x}\bm{x}^\top]$. However, if the matrix dimension is large such that $d>n$, the problem \eqref{orig} does not admit a solution.
Numerous studies have been conducted to explore shrinkage and regularization frameworks for precision matrix to achieve estimations which are invertible, well-conditioned, and enjoy desirable statistical properties such as sparsity \citep{yuan2007model,ledoit2012nonlinear,nguyen2022distributionally}. 
We will show later that the proposed adversarially perturbed framework in this paper can recover some of these shrinkage and regularization effects by selecting different types of perturbations.

Our proposed adversarially perturbed precision matrix framework is inspired by recent developments in trustworthy machine-learning---adversarial training \citep{Goodfellow2015}. The perturbed estimation is obtained by solving the following optimization problem:
\begin{equation}\label{intro}
\inf_{ C\succ 0}\left\{ -\log \mathrm{det}  C +\mathbb{E}_{\mathbb{P}_n} \left[\max_{\Vert \bm{\Delta}\Vert\leq \delta}(\bm{x}+\bm{\Delta})^\top  C (\bm{x}+\bm{\Delta})\right]\right\},\end{equation}
where $\Vert\cdot\Vert$ is the norm of the perturbation $\bm{\Delta}$, and $\delta$ is the magnitude of the perturbation. 
Instead of optimizing the empirical average of  the likelihood function, the adversarially perturbed estimation problem \eqref{intro} seeks for the minimizer of the loss under the most adversarial perturbation $\bm{\Delta}$, which is determined by both $C$ and $\bm{x}$ through the inner maximization problem in \eqref{intro}, subject to the constraint that the norm of the perturbation is bounded.

In this paper, we will focus on the $\ell_p$-norm of the perturbation. 

When $p=2$, it can be proven that the perturbed estimation problem \eqref{intro} is equivalent to the Wasserstein shrinkage estimation proposed in \cite{nguyen2022distributionally}. The resulting optimization problem is tractable and yields a nonlinear shrinkage estimator that is well-conditioned, rotation equivalent, and preserves the eigenvalue order \citep{nguyen2022distributionally}. 

When $p=\infty$, we optimize a convex upper-bound surrogate of the objective function in problem \eqref{intro} to admit tractability. 
The resulting estimation is scale-adaptive and proven to achieve adversarial robustness and high-dimensional model selection consistency.
The high-dimensional model selection consistency achieved by the $\ell_1$-regularized estimation requires the mutual incoherence condition \citep{raskutti2008model,wainwright2009sharp,ravikumar2011high}. 
However, our $\ell_\infty$-perturbed precision matrix estimation reduces the upper bound of the mutual incoherence by a factor of the ratio of the maximum to the minimum marginal standard deviation of the variables. This property facilitates consistency across a broader range of scenarios, particularly under heteroscedastic settings.

Then, we demonstrate that introducing perturbations is asymptotically equivalent to imposing regularization in precision matrix estimation, provided the perturbation magnitude is small. 
Based on the regularization effect, the asymptotic distribution of the adversarially perturbed estimation is derived. The asymptotic distribution characterizes the bias induced by the inner maximization problem in our proposed framework \eqref{intro}. We suppose the perturbation magnitude follows the rule $\delta=\eta n^{-\gamma}$, where $\eta > 0$ and $\gamma > 0$. 
Our analysis yields the following two regimes: When $\gamma > 1/2$, the asymptotic bias is zero. In this regime, the adversarial perturbation decays sufficiently fast such that it does not affect estimation accuracy. When $\gamma \leq 1/2$, the asymptotic distribution has a non-zero bias. 
Furthermore, specifically for $\ell_\infty$-perturbed surrogate problem, the derived asymptotic distribution demonstrates that sparsity recovery is achievable when $\gamma = 1/2$. 
See a summary table of the associated asymptotic behavior in Table \ref{tab}, Section \ref{asynsec}.
\subsection{Related Work}
\textit{\textbf{Adversarial Training}}:

 As the modern artificial intelligence and machine learning models grow more powerful and widely used, their reliability becomes increasingly important. One major challenge to reliability is that machine learning models can be vulnerable to small changes in input data, which may lead to unreliable outputs \citep{Goodfellow2015}.
To improve the model robustness, adversarial training has been introduced as a principled approach by optimizing against perturbations posed in the input data \citep{Goodfellow2015, madry2018towards}.

Different from the classic empirical risk minimization approach, the resulting adversarial training problem is a min-max robust optimization problem  and thus introduces a new statistical estimation paradigm. 
While most of the relevant existing work focus on the vector-valued parameter in Euclidean space \citep{Goodfellow2015, madry2018towards,javanmard2022precise,ribeiro2023regularization,xie2024high,xie2025asymptotic}, our proposed framework illustrates how adversarial training inspires a novel methodology for estimating matrix-valued parameters within the positive-definite cone.
To the best of our knowledge, we are the first to formulate the precision matrix estimation problem based on adversarial training.

Adversarial training is closely related to distributionally robust optimization. It is well known that adversarial training can be viewed as a special case of the Wasserstein distributionally robust optimization \citep{staib2017distributionally,gao2024wasserstein}. This paper, however, provides a reverse perspective by demonstrating that, in the context of precision matrix estimation, the Wasserstein distributionally robust optimization is a special case of adversarial training.

\textit{\textbf{Precision Matrix Estimation}}:

Precision matrix estimation has been extensively studied under different settings. Classical approaches primarily focus on enforcing sparsity via $\ell_1$-regularization or $\ell_1$-minimization \citep{yuan2007model,friedman2008sparse,cai2011constrained}.
The novel framework proposed in this paper is also capable of the sparsity recovery and model selection consistency. 
However, as mentioned in Section \ref{sectionintro} and detailed in Section \ref{varianceadaptive}, the model consistency selection achieved by our framework is based on the scale-adaptive incoherence condition which can be weaker than that for the $\ell_1$-regularized estimation, especially under the heteroscedastic settings. The reason is that the adaptive incoherence in our framework has a smaller upper bound, potentially relaxing the requirements for model
consistency selection when variances differ significantly.

As applications have broadened, researchers have developed methods robust to data anomalies. For instance, \cite{loh2018high} establishes a framework robust to data contamination, \cite{tang2021robust} focuses on mitigating the impact of outliers, \cite{avella2018robust} studies robustness against heavy-tailed distributions, and the method framed in \cite{nguyen2022distributionally} addresses distributional shifts.
From the perspective of robustness, 
our framework can recover the Wasserstein robust estimation \citep{nguyen2022distributionally} under $\ell_2$-perturbation.
In addition, by defining the model through an adversarial training lens, we achieve intrinsic adversarial robustness. Our aforementioned scale adaptivity can also be interpreted as the robustness towards the  heteroscedasticity.
\subsection{Notations and Definitions}
$X_{ij}$ denotes the $(i,j)$-th entry of the matrix $X$.
$X_{\mathcal{I}\mathcal{J}}$ denotes the submatrix of $X$ defined by the row index set $\mathcal{I}$ and the column index set $\mathcal{J}$. $\mathrm{sign}(x)$ denotes the sign of a scalar $x$ with $\mathrm{sign}(0)=0$. Accordingly, $\mathrm{sign}(X)$ denotes the element-wise application of the sign function to matrix $X$.
$\Vert\bm{x}\Vert_p$ denotes the $\ell_p$-norm of the vector $\bm{x} \in \mathbb{R}^d$, defined as $\Vert\bm{x}\Vert_p = (\sum_{j=1}^d\vert x_j \vert^p)^{1/p}$ for $1\leq p<\infty$, and $\Vert\bm{x}\Vert_\infty=\max_{1\leq j\leq d} \vert x_j\vert$.
$\Vert X \Vert_\infty$ denotes the induced matrix $\ell_\infty$-norm, defined as $\Vert X \Vert_\infty = \max_{1\leq i\leq d} \sum_{j=1}^d \vert X_{ij} \vert$.
$\Vert X \Vert_{1,1}$ denotes the element-wise $\ell_1$-norm, defined as $\Vert X \Vert_{1,1} = \sum_{i=1}^d\sum_{j=1}^d \vert [X]_{ij} \vert$.
$\Vert X \Vert_{\max}$ denotes the maximum element-wise absolute value, defined as $\Vert X \Vert_{\max} = \max_{_{1\leq i,j\leq d}}\vert X_{ij}\vert$.
$\odot$ denotes the Hadamard product.
$\otimes$ denotes the Kronecker product.
$\mathbb{I}(\cdot)$ denotes the indicator function.
$\mathrm{vec}(X)$ denotes the vectorization of the matrix $X$.
\subsection{Organization of this Paper}
The remainder of this paper is organized as follows. 
In Section \ref{ell2}, we study the adversarially $\ell_2$-perturbed precision matrix estimation and its equivalence to the Wasserstein shrinkage estimation.
In Section \ref{ellinf}, we investigate the formulation and the properties of the adversarially $\ell_\infty$-perturbed precision matrix estimation.
In Section \ref{regeffect}, we introduce the regularization effect of the proposed framework.
In Section \ref{asynsec}, we study the asymptotic distribution of the proposed precision matrix estimation.
Numerical experiments are conducted and analyzed in Section \ref{numsection}.
Future work is discussed in Section \ref{dis}.
The proofs are relegated to the appendix whenever possible.

\section{Adversarially $\ell_2$-perturbed Precision Matrix Estimation}\label{ell2}
In this section, we will discuss the $\ell_2$-perturbed precision matrix estimation.

The adversarially perturbed precision matrix estimation under $\ell_2$-norm is the solution to the following optimization problem:
\begin{equation}\label{mainproblem1}
\inf_{ C\succ 0}\left\{ -\log \mathrm{det}  C +\mathbb{E}_{\mathbb{P}_n}\left[\max_{\Vert \bm{\Delta}\Vert_2\leq \delta}(\bm{x}+\bm{\Delta})^\top  C (\bm{x}+\bm{\Delta})\right]\right\},\end{equation}
where the perturbation norm in the constraint set in the inner maximization problem is chosen as $\ell_2$-norm.
\subsection{Tractable Reformulation}
In this subsection, we give the convex tractable reformulations of the adversarially $\ell_2$-perturbed precision matrix estimation.
\begin{theorem}[Convex Reformulations under $\ell_2$-perturbation] \label{tractable}
    The problem \eqref{mainproblem1} is equivalent to the following problem:
    \begin{equation}\label{equiproblem}
\inf_{C\succ 0,\lambda I - C \succ 0}\left\{ -\log \mathrm{det}  C+\mathbb{E}_{\mathbb{P}_n}[\bm{x}^\top  C\bm{x}] +  \lambda\delta^2+\mathbb{E}_{\mathbb{P}_n}[\bm{x}^\top  C (\lambda I - C )^{-1}  C\bm {x}] \right\},\end{equation}
which further admits the following equivalent reformulation: 
\begin{equation}\label{detsemidefinite}
\begin{aligned}
\inf_{\lambda, \{t_i\}, C} \quad &  -\log \mathrm{det}  C+\frac{1}{n}\sum_{i=1}^n{\bm{x}^i}^\top  C\bm{x}^i+\lambda\delta^2+\frac{1}{n}\sum_{i=1}^n t_i \\
\mathrm{subject\ to} \quad & C\succ 0,\lambda I - C \succ 0,\\
& \begin{pmatrix}
t_i &  {\bm{x}^i}^\top C \\
C\bm {x}_i & \lambda I - C 
\end{pmatrix} \succeq 0, \quad \forall i = 1, \dots, n.\\
\end{aligned}\end{equation}
\end{theorem}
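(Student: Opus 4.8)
The plan is to eliminate the inner maximization in \eqref{mainproblem1} analytically, substitute the result back into the outer problem, and finish with a Schur-complement reformulation.

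\textbf{Step 1 (the inner maximization).} Fix $C\succ 0$ and a single atom $\bm x$. The inner problem $\max_{\|\bm\Delta\|_2\le\delta}(\bm x+\bm\Delta)^\top C(\bm x+\bm\Delta)$ maximizes the convex quadratic $\bm x^\top C\bm x+2\bm x^\top C\bm\Delta+\bm\Delta^\top C\bm\Delta$ over the Euclidean ball $\|\bm\Delta\|_2^2\le\delta^2$. This is a trust-region-type subproblem: nonconvex, since a convex function is being maximized, but it has zero Lagrangian duality gap by the classical S-lemma / trust-region strong-duality theorem, and Slater's condition holds trivially because $\delta>0$. I would introduce a multiplier $\lambda\ge 0$ for $\|\bm\Delta\|_2^2\le\delta^2$, observe that $\sup_{\bm\Delta\in\mathbb R^d}\{\bm x^\top C\bm x+2\bm x^\top C\bm\Delta+\bm\Delta^\top(C-\lambda I)\bm\Delta+\lambda\delta^2\}$ is finite exactly when $\lambda I-C\succeq 0$, and for $\lambda I-C\succ 0$ compute the maximizer $\bm\Delta^\star=(\lambda I-C)^{-1}C\bm x$, which gives value $\bm x^\top C\bm x+\bm x^\top C(\lambda I-C)^{-1}C\bm x+\lambda\delta^2$. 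Hence the inner maximum equals $\inf_{\lambda I-C\succ 0}\{\bm x^\top C\bm x+\bm x^\top C(\lambda I-C)^{-1}C\bm x+\lambda\delta^2\}$. One mild technical point is that this infimum need not be attained at the positive-semidefinite boundary, so one keeps the open constraint $\lambda I-C\succ 0$ and passes to limits when necessary; the convexity of $\lambda\mapsto\bm x^\top C(\lambda I-C)^{-1}C\bm x$ on this domain keeps the argument clean.

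\textbf{Step 2 (assembling the outer problem).} Substituting Step 1 into \eqref{mainproblem1} gives $\inf_{C\succ 0}\{-\log\det C+\mathbb E_{\mathbb P_n}[\inf_{\lambda I-C\succ 0}\{\cdots\}]\}$; pulling the infimum outside the empirical expectation — and, crucially, sharing a single multiplier $\lambda$ across the $n$ atoms — yields \eqref{equiproblem}, whereupon its coincidence with the Wasserstein shrinkage program of \cite{nguyen2022distributionally} can be read off term by term. I expect this consolidation to be the main obstacle: the per-atom dualization of Step 1 a priori attaches a separate multiplier $\lambda_i$ to each sample, so one must justify collapsing the joint minimization over $(C,\lambda_1,\dots,\lambda_n)$ to a common $\lambda$ (equivalently, recognizing the worst-case perturbation across the sample as carrying a single aggregate budget, so that its dual has a single multiplier). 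This is where I would spend the bulk of the argument, verifying that the common-multiplier program is neither a relaxation nor a restriction of \eqref{mainproblem1}.

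\textbf{Step 3 (the semidefinite form).} Finally, to reach \eqref{detsemidefinite} I would epigraph the convex term sample by sample: on $\{\lambda I-C\succ 0\}$, the Schur-complement lemma says $\bm x_i^\top C(\lambda I-C)^{-1}C\bm x_i\le t_i$ is equivalent to the linear matrix inequality $\begin{pmatrix} t_i & \bm x_i^\top C\\ C\bm x_i & \lambda I-C\end{pmatrix}\succeq 0$. Introducing one slack $t_i$ per observation and minimizing jointly over $(\lambda,\{t_i\},C)$ subject to these inequalities together with $C\succ 0$ and $\lambda I-C\succ 0$ reproduces \eqref{detsemidefinite}; the objective is $-\log\det C$ plus linear terms, so the resulting program is convex. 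A last routine check is nonemptiness of the feasible set — any $C\succ 0$ with $\lambda$ large enough that $\lambda I-C\succ 0$ and $t_i$ large is feasible — which certifies that all three problems share the same optimal value.
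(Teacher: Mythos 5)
Your Steps 1 and 3 track the paper's argument closely: the paper's Lemma \ref{lemma1} is exactly the Lagrangian/trust-region dualization of the inner maximization that you describe (the paper does not even invoke the S-lemma explicitly, so your appeal to strong duality is, if anything, the more careful version), and the Schur-complement epigraph step with one slack $t_i$ per sample is identical. The problem is Step 2, which you correctly identify as the crux and then do not carry out. Dualizing the inner maximization sample by sample produces, for each fixed $C$, the value $\frac{1}{n}\sum_{i}\inf_{\lambda_i I - C\succ 0}\{\lambda_i\delta^2 + \bm{x}_i^\top C(\lambda_i I-C)^{-1}C\bm{x}_i\}$, whereas \eqref{equiproblem} evaluates $\inf_{\lambda I - C\succ 0}\{\lambda\delta^2+\frac{1}{n}\sum_i \bm{x}_i^\top C(\lambda I-C)^{-1}C\bm{x}_i\}$. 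Since an infimum of a sum dominates the sum of the infima, the common-multiplier program is a priori only an upper bound on the separately dualized one, with equality exactly when the per-sample minimizers $\lambda_i^\ast$ can be chosen to coincide; the stationarity condition $\bm{x}_i^\top C(\lambda I-C)^{-2}C\bm{x}_i=\delta^2$ ties $\lambda_i^\ast$ to $\Vert C\bm{x}_i\Vert$, so generically they do not. Saying you ``would spend the bulk of the argument'' here is not a substitute for the argument: as written, the proposal establishes only one inequality between \eqref{mainproblem1} and \eqref{equiproblem}, not the claimed equivalence.

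It is worth noting that the paper's own proof performs this same collapse silently --- it passes from the per-sample dual of Lemma \ref{lemma1} directly to the single-$\lambda$ program with the phrase ``it follows that'' --- so you have put your finger on the one step that genuinely needs justification. Indeed, the distinction between one multiplier per sample and one shared multiplier is precisely the distinction between a per-sample perturbation budget and an aggregate, Wasserstein-type budget, which is the substance of Theorem \ref{equivalence}; it cannot be waved through. Until that step is supplied (or the statement is weakened to an inequality), the proposal is incomplete at its self-identified critical point.
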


The reformulation \eqref{equiproblem} is obtained by deriving the dual of the inner maximization problem in \eqref{mainproblem1}, which admits a closed-form solution under $\ell_2$-perturbation. Subsequently, applying the Schur complement yields formulation \eqref{detsemidefinite}. 

Problem \eqref{detsemidefinite} is a convex semidefinite optimization problem with a log-determinant barrier term in the objective function and linear matrix inequalities in the decision variables $\lambda, \{t_i\}, C$.
Such problems fall within the scope of semidefinite (conic) programming and can be solved efficiently by interior-point methods \citep{boyd2004convex,nemirovski2007advances}.
\subsection{Equivalence to the Wasserstein Shrinkage Estimation}\label{equivalencesubsection}
In this subsection, we will show that the $\ell_2$-perturbed precision matrix estimation problem is equivalent to the Wasserstein distributionally robust precision matrix estimation proposed in \cite{nguyen2022distributionally}, which they coin as the Wasserstein shrinkage estimation.

The optimization problem for obtaining the Wasserstein shrinkage estimation introduced in \cite{nguyen2022distributionally} can be written as the following problem:
\begin{equation}\label{WDRO}
\inf_{ C\succ 0}\left\{ -\log \mathrm{det}  C + \sup_{S\in \mathcal{U}_{\bar{A},\rho}}\mathrm{tr}(S C), \right\},\end{equation}
where $\mathcal{U}_{\bar{A},\rho}$ is the uncertainty set and $\rho$ is the radius of the set. The uncertainty set is centered at the sample covariance $\bar{A}=\mathbb{E}_{\mathbb{P}_n}[\bm{x}\bm{x}^\top]$ and is defined in terms of the metric induced by the 2-Wasserstein distance on the family of normal distributions, and we have that 
\[\mathcal{U}_{\bar{A},\rho} = \left\{ S\succeq 0: \mathrm{tr}(\bar{A})+\mathrm{tr}(S)-2\mathrm{tr}(\sqrt{\bar{A}^{\frac{1}{2}}S\bar{A}^{\frac{1}{2}}})\leq \rho^2\right\}.\]

\cite{nguyen2022distributionally} shows that the min-max optimization problem in \eqref{WDRO} admits a convex reformulation, as restated in Theorem \ref{nguyenreformulation}.
\begin{theorem}[\cite{nguyen2022distributionally}]\label{nguyenreformulation}
    For any fixed $\rho>0$, the  Wasserstein distributionally robust estimation problem \eqref{WDRO} is equivalent to 
        \begin{equation}\label{WDROform}
\inf_{C\succ 0,\lambda I - C \succ 0}\left\{ -\log \mathrm{det}  C+ \lambda \left(\rho^2- \frac{1}{n}\sum_{i=1}^n\bm{x}^i {\bm{x}^i}^\top \right)+\lambda^2 \frac{1}{n}\sum_{i=1}^n\bm{x}^i (\lambda I- C  )^{-1}{\bm{x}^i}^\top   \right\}.\end{equation}
\end{theorem}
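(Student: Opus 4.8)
The plan is to remove the inner supremum over $S$ in \eqref{WDRO} by strong Lagrangian duality, evaluate the resulting one-dimensional dual in closed form, and recognize the outcome as \eqref{WDROform}, reproducing the argument of \cite{nguyen2022distributionally}. Fix $C\succ 0$ and dualize the constraint defining $\mathcal{U}_{\bar{A},\rho}$ with a multiplier $\lambda\geq 0$. Since $\rho>0$, the choice $S=\bar{A}$ is strictly feasible (its Gelbrich/$2$-Wasserstein distance to $\bar{A}$ is $0<\rho^{2}$), so Slater's condition holds and strong duality yields
\[
\sup_{S\in\mathcal{U}_{\bar{A},\rho}}\mathrm{tr}(SC)=\inf_{\lambda\geq 0}\Big\{\lambda\rho^{2}-\lambda\,\mathrm{tr}(\bar{A})+\sup_{S\succeq 0}\big[\mathrm{tr}(S(C-\lambda I))+2\lambda\,\mathrm{tr}\big((\bar{A}^{1/2}S\bar{A}^{1/2})^{1/2}\big)\big]\Big\}.
\]

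Next I would evaluate the inner supremum over $S\succeq 0$ for fixed $\lambda$. Using the optimal-transport parametrization $S=T\bar{A}T$ with $T\succeq 0$, which is a bijection onto $\{S\succeq 0\}$ when $\bar{A}\succ 0$ and under which the squared $2$-Wasserstein distance becomes $\mathrm{tr}((I-T)\bar{A}(I-T))$, the inner supremum becomes the concave quadratic maximization $\sup_{T\succeq 0}\{\mathrm{tr}(T\bar{A}T(C-\lambda I))+2\lambda\,\mathrm{tr}(T\bar{A})\}$. This value is finite exactly when $\lambda I-C\succ 0$; in that case its stationary point $T^{\star}=\lambda(\lambda I-C)^{-1}$ is automatically positive semidefinite (so the constraint $T\succeq 0$ is inactive), and the optimal value equals $\lambda^{2}\,\mathrm{tr}\big((\lambda I-C)^{-1}\bar{A}\big)$. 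Equivalently, one may substitute the variational identity $\mathrm{tr}((\bar{A}^{1/2}S\bar{A}^{1/2})^{1/2})=\tfrac{1}{2}\min_{L\succ 0}\mathrm{tr}(L\bar{A}+L^{-1}S)$ and perform a minimax swap. Plugging this back, writing $\bar{A}=\tfrac{1}{n}\sum_{i=1}^{n}\bm{x}_{i}\bm{x}_{i}^{\top}$ so that $\mathrm{tr}(\bar{A})=\tfrac{1}{n}\sum_{i}\Vert\bm{x}_{i}\Vert_{2}^{2}$ and $\mathrm{tr}((\lambda I-C)^{-1}\bar{A})=\tfrac{1}{n}\sum_{i}\bm{x}_{i}^{\top}(\lambda I-C)^{-1}\bm{x}_{i}$, and merging the infimum over $C\succ 0$ with the infimum over $\lambda$ (the constraints $C\succ 0$ and $\lambda I-C\succ 0$ force $\lambda>0$, so the nonnegativity of $\lambda$ is redundant) produces exactly \eqref{WDROform}.

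The main obstacle is making the two interchanges rigorous — strong duality for the constrained $S$-problem and the maximization in the closed-form step — without assuming $\bar{A}\succ 0$, which is the regime $d>n$ of primary interest. When $\bar{A}$ is singular the transport parametrization no longer covers all of $\{S\succeq 0\}$, but one checks directly from the dualized $S$-problem that the supremum is attained with $S$ supported on the range of $\bar{A}$: since $\lambda I-C\succ 0$, any component of $S$ outside that range only makes $\mathrm{tr}(S(C-\lambda I))$ more negative while leaving $\mathrm{tr}((\bar{A}^{1/2}S\bar{A}^{1/2})^{1/2})$ unchanged, so it is never beneficial, and the problem reduces to the nonsingular computation; alternatively one passes to the limit from $\bar{A}+\varepsilon I$ as $\varepsilon\downarrow 0$ using continuity of both sides. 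The remaining steps — that finiteness of the $S$-supremum is equivalent to $\lambda I-C\succ 0$, the computation of the stationary point, and the final algebraic bookkeeping — are routine.
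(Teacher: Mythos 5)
The paper does not actually prove Theorem \ref{nguyenreformulation}; it is imported verbatim from \cite{nguyen2022distributionally}, so there is no in-paper argument to compare against. Your reconstruction via Lagrangian duality is nevertheless sound and consistent with the rest of the paper: the Lagrangian you write down is correct, the parametrization $S=T\bar{A}T$ does give $\mathrm{tr}\bigl((\bar{A}^{1/2}S\bar{A}^{1/2})^{1/2}\bigr)=\mathrm{tr}(T\bar{A})$, the stationary point $T^\star=\lambda(\lambda I-C)^{-1}$ satisfies the symmetrized first-order condition $\bar{A}T^\star M+MT^\star\bar{A}+2\lambda\bar{A}=0$ with $M=C-\lambda I$, and the resulting value $\lambda^2\,\mathrm{tr}\bigl((\lambda I-C)^{-1}\bar{A}\bigr)$ matches the algebraic identity the paper itself verifies in the proof of Theorem \ref{equivalence}, which is an independent consistency check. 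Two small points: first, when $\bar{A}$ is singular your strictly feasible point for Slater should be $\bar{A}+\varepsilon I$ rather than $\bar{A}$ itself (so that it also lies in the interior of the cone $S\succeq 0$, which you keep undualized); second, the claim that finiteness of the inner supremum is \emph{exactly} equivalent to $\lambda I-C\succ 0$ holds as stated only for $\bar{A}\succ 0$ --- for singular $\bar{A}$ the supremum can remain finite when $\lambda I-C$ is merely positive definite on the range of $\bar{A}$, so the restriction to the open set $\lambda I-C\succ 0$ in \eqref{WDROform} needs the continuity/limiting argument you sketch. You correctly flag both issues, so these are presentation gaps rather than errors.
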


The equivalence of our adversarially perturbed estimation problem \eqref{mainproblem1} and distributionally robust trained problem \eqref{WDRO} can be built upon problem \eqref{equiproblem} and \eqref{WDROform} based on algebraic calculations. We state this result in the following theorem.
\begin{theorem}[Equivalence]\label{equivalence}
For fixed $\rho=\delta$, the $\ell_2$-perturbed precision matrix estimation problem \eqref{mainproblem1} is equivalent to the Wasserstein distributionally robust precision matrix estimation problem \eqref{WDRO}.
\end{theorem}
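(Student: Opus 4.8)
The plan is to connect the two problems through their convex reformulations rather than to attack the min--max problems directly. By Theorem~\ref{tractable}, the adversarially $\ell_2$-perturbed problem~\eqref{mainproblem1} has the same feasible set and objective as problem~\eqref{equiproblem}; by Theorem~\ref{nguyenreformulation}, the Wasserstein problem~\eqref{WDRO} is likewise identical to problem~\eqref{WDROform}. It therefore suffices to show that, under the identification $\rho=\delta$, problems~\eqref{equiproblem} and~\eqref{WDROform} are literally the same optimization problem.

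First I would check that the two feasible regions coincide: both are optimized jointly over $(C,\lambda)$ subject to $C\succ 0$ and $\lambda I-C\succ 0$, and these two constraints already force $\lambda>0$, so no further hidden range restriction on $\lambda$ differs between the formulations. Next I would match the objectives term by term. The $-\log\det C$ terms are identical, and with $\rho=\delta$ the terms $\lambda\delta^2$ and $\lambda\rho^2$ coincide. Using $\mathbb{E}_{\mathbb{P}_n}[\bm{x}^\top A\bm{x}]=\mathrm{tr}(A\bar A)$ with $\bar A=\frac1n\sum_i\bm{x}_i\bm{x}_i^\top$, the remaining terms of~\eqref{equiproblem} assemble into $\mathrm{tr}\!\big((C+C(\lambda I-C)^{-1}C)\bar A\big)$, and those of~\eqref{WDROform} into $\mathrm{tr}\!\big((\lambda^2(\lambda I-C)^{-1}-\lambda I)\bar A\big)$. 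Hence the whole claim reduces to the operator identity
\[
C+C(\lambda I-C)^{-1}C=\lambda^2(\lambda I-C)^{-1}-\lambda I
\]
on the feasible region.

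To establish this identity I would exploit that $C$ commutes with $\lambda I-C$ (both equal $\lambda C-C^2$ when multiplied in either order), and hence with $(\lambda I-C)^{-1}$. Writing $M=\lambda I-C$ so that $C=\lambda I-M$ and $CM^{-1}=M^{-1}C$, one computes
\[
C+CM^{-1}C=CM^{-1}(M+C)=\lambda\,CM^{-1}=\lambda(\lambda I-M)M^{-1}=\lambda^2M^{-1}-\lambda I,
\]
which upon substituting back $M=\lambda I-C$ is exactly the right-hand side. Chaining the three equivalences, \eqref{mainproblem1} $\Leftrightarrow$ \eqref{equiproblem} $\Leftrightarrow$ \eqref{WDROform} $\Leftrightarrow$ \eqref{WDRO}, then gives the theorem.

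The step that matters most here is modest: it is the purely algebraic verification of the operator identity together with the easy-but-necessary observation that the feasible sets of~\eqref{equiproblem} and~\eqref{WDROform} are exactly equal, so that ``equivalent'' can be read in the strong sense of identical problems (hence identical optimal values and matching minimizers in $C$), not merely of equal optimal values. The only care point is reconciling the scalar/trace notation appearing in~\eqref{WDROform} with the quadratic-form notation of~\eqref{equiproblem}, which is handled by the $\mathbb{E}_{\mathbb{P}_n}[\bm{x}^\top A\bm{x}]=\mathrm{tr}(A\bar A)$ rewriting used above.
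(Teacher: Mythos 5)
Your proposal is correct and follows essentially the same route as the paper: both chain the equivalences \eqref{mainproblem1} $\Leftrightarrow$ \eqref{equiproblem} and \eqref{WDRO} $\Leftrightarrow$ \eqref{WDROform} and then reduce the claim to an algebraic identity verified via the commutativity of $C$ with its resolvent $(\lambda I - C)^{-1}$. The only cosmetic difference is that you prove the identity at the operator level ($C + C(\lambda I-C)^{-1}C = \lambda^2(\lambda I-C)^{-1} - \lambda I$) while the paper proves the equivalent trace identity against $\bar{A}$ using the factorization $C^2 - \lambda^2 I = (C-\lambda I)(C+\lambda I)$.
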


Theorem \ref{equivalence} implies that,  in the precision matrix estimation problem,  our proposed adversarially perturbed approach could recover the Wasserstein distributionally estimator if the $\ell_2$-perturbation is chosen.
Imposing an $\ell_2$-perturbation to the observation $\bm{x}$ is equivalent to constructing an ambiguity set for the sample covariance matrix in terms of the Wasserstein distance.
As illustrated in \cite{nguyen2022distributionally}, the resulting estimator is invertible, rotation equivalent, preserves the eigenvalue order, and can be obtained by solving a sequential quadratic approximation algorithm.

Although arriving at the same solution, our proposed perturbed framework \eqref{mainproblem1} adopts a sample-wise perturbation perspective and admits the asymptotic expansion, from which the asymptotic distribution is derived, seeing Section \ref{regeffect} and Section \ref{asynsec}.
In this way, our framework also helps facilitate rigorous statistical inference.

\section{Adversarially $\ell_\infty$-perturbed Precision Matrix Estimation}\label{ellinf}
In this section, we explore the adversarially $\ell_\infty$-perturbed precision matrix estimation.

For the adversarially perturbed precision matrix estimation \eqref{intro}, if the norm of the perturbation is chosen as $\ell_\infty$-norm, we have the following formulation:
\begin{equation}
\label{mainproblem2}
\inf_{ C\succ 0}\left\{ -\log \mathrm{det}  C +\mathbb{E}_{\mathbb{P}_n}\left[\max_{\Vert \bm{\Delta}\Vert_\infty\leq \delta}(\bm{x}+\bm{\Delta})^\top  C (\bm{x}+\bm{\Delta})\right]\right\}.\end{equation}

The inner maximization problem in problem \eqref{mainproblem2} is NP-hard \citep{nemirovski1999maximization}.
To ensure computational tractability, we replace the NP-hard inner maximization in \eqref{mainproblem2} with a convex upper-bound surrogate. Based on the triangle inequality on the perturbation terms, we derive the following formulation:
\begin{equation}\label{adaptive}\inf_{ C\succ 0}\left\{-\log \mathrm{det}  C + \mathbb{E}_{\mathbb{P}_n}\left[\bm{x}^\top  C\bm{x}\right]+ 2\delta\sum_{j=1}^d \sum_{k=1}^d  \hat\omega_k \vert C_{kj}\vert+ \delta^2 \Vert  C\Vert_{1,1}\right\}, \end{equation}
where 
\[\hat\omega_k= \mathbb{E}_{\mathbb{P}_n}\left[\left\vert\bm{x}_k\right\vert\right].\]

The formulation \eqref{adaptive} is a regularized precision matrix.
The term $2\delta\sum_{j,k=1}^d \hat\omega_k \vert C_{kj}\vert$ is scale-adaptive weighted $\ell_1$-regularizer. 
The weights $\omega_k$'s enforce a heterogeneous shrinkage scheme where entries associated with variables of larger scale are penalized more heavily. 
Since high-magnitude features act as high-leverage points for adversarial exploitation, this weighting scheme promotes an estimation that is not only sparse but also intrinsically robust to adversarial manipulation.
The quadratic term $\delta^2 \Vert C\Vert_{1,1}$ addresses the second-order interactions of the adversarial perturbation.
By penalizing the entry-wise magnitude of matrix $C$, this quadratic regularizer serves as a global stability adjustment, ensuring the surrogate remains a rigorous upper bound for the original objective in \eqref{mainproblem2}. 
This whole regularization mechanism  promotes scale-adaptive model selection and adversarial robustness, which we will be illustrated in Section \ref{adversarialrobustnesssec} and Section \ref{varianceadaptive}.

\subsection{Adversarial Robustness}\label{adversarialrobustnesssec}
We demonstrate the adversarial robustness of problem \eqref{adaptive} in the following proposition.
\begin{proposition}[Adversarial Robustness] \label{adversarialrobustness}Suppose $\widetilde{ C}$ is the minimizer of problem \eqref{adaptive}, and $\mathcal{S}(\cdot)$ is the associated objective function. Then, the adversarial robust loss of $\widetilde{ C}$ is upper bounded, i.e., 
    \[   -\log \mathrm{det} \widetilde{ C} +\mathbb{E}_{\mathbb{P}_n}\left[\max_{\Vert\bm{\Delta}\Vert_p\leq \delta}(\bm{x}+\bm{\Delta})^\top \widetilde{ C} (\bm{x}+\bm{\Delta})\right]  \leq \inf_{ C\succ 0} \mathcal{S}( C),\quad 1\leq p\leq \infty.\]
\end{proposition}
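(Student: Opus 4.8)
The plan is a two-sided sandwich: show that the surrogate objective $\mathcal{S}(C)$ dominates the $\ell_\infty$-adversarial robust loss for every $C\succ0$, that the $\ell_\infty$-adversarial loss in turn dominates the $\ell_p$-adversarial loss for every $p\in[1,\infty]$, and then evaluate everything at the minimizer $\widetilde{C}$ of $\mathcal{S}$.

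First I would record the norm-ball inclusion: since $\Vert\bm{\Delta}\Vert_\infty\le\Vert\bm{\Delta}\Vert_p$ for all $p\ge1$, the feasible set $\{\bm{\Delta}:\Vert\bm{\Delta}\Vert_p\le\delta\}$ of the $\ell_p$-inner maximization is contained in the feasible set $\{\bm{\Delta}:\Vert\bm{\Delta}\Vert_\infty\le\delta\}$ of the $\ell_\infty$-inner maximization, and because the two problems share the same objective $(\bm{x}+\bm{\Delta})^\top C(\bm{x}+\bm{\Delta})$, enlarging the feasible set can only increase the maximum; hence $\max_{\Vert\bm{\Delta}\Vert_p\le\delta}(\bm{x}+\bm{\Delta})^\top C(\bm{x}+\bm{\Delta})\le\max_{\Vert\bm{\Delta}\Vert_\infty\le\delta}(\bm{x}+\bm{\Delta})^\top C(\bm{x}+\bm{\Delta})$ pointwise in $\bm{x}$, for all $1\le p\le\infty$ and all $C\succ0$.

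Next I would invoke the surrogate bound that already underlies the passage from \eqref{mainproblem2} to \eqref{adaptive}: expanding $(\bm{x}+\bm{\Delta})^\top C(\bm{x}+\bm{\Delta})=\bm{x}^\top C\bm{x}+2\bm{x}^\top C\bm{\Delta}+\bm{\Delta}^\top C\bm{\Delta}$ and bounding the cross term and the quadratic term by the triangle inequality under $\Vert\bm{\Delta}\Vert_\infty\le\delta$ gives, uniformly in $\bm{\Delta}$, an upper bound $\bm{x}^\top C\bm{x}+2\delta\sum_{j=1}^d\sum_{k=1}^d\vert[\bm{x}]_k\vert\,\vert[C]_{kj}\vert+\delta^2\Vert C\Vert_{1,1}$. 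Since this holds uniformly in $\bm{\Delta}$, I can take the inner maximum and then the expectation $\mathbb{E}_{\mathbb{P}_n}$ without worrying about their order; using $\mathbb{E}_{\mathbb{P}_n}[\vert[\bm{x}]_k\vert]=\omega_k$ then yields $-\log\det C+\mathbb{E}_{\mathbb{P}_n}[\max_{\Vert\bm{\Delta}\Vert_\infty\le\delta}(\bm{x}+\bm{\Delta})^\top C(\bm{x}+\bm{\Delta})]\le\mathcal{S}(C)$ for every $C\succ0$. Chaining this with the inclusion bound of the previous paragraph and specializing to $C=\widetilde{C}$, where $\mathcal{S}(\widetilde{C})=\inf_{C\succ0}\mathcal{S}(C)$ by definition of the minimizer, delivers the claimed inequality for all $1\le p\le\infty$.

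I do not expect any serious obstacle: the statement is structurally a sandwich and each ingredient is elementary given the construction of \eqref{adaptive}. The only points needing care are getting the direction of the norm comparison right --- it is the $\ell_p$-ball, not the $\ell_\infty$-ball, that is the smaller set --- and establishing the surrogate bound uniformly over $\bm{\Delta}$ before exchanging it with $\max$ and $\mathbb{E}_{\mathbb{P}_n}$, so that the final chain of inequalities is legitimate.
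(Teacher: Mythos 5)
Your proposal is correct and follows essentially the same route as the paper's proof: the set inclusion $\{\Vert\bm{\Delta}\Vert_p\le\delta\}\subseteq\{\Vert\bm{\Delta}\Vert_\infty\le\delta\}$ to reduce to the $\ell_\infty$ case, the expansion of the quadratic form with the cross term bounded via H\"older/triangle inequality by $2\delta\sum_{j,k}\vert[\bm{x}]_k\vert\,\vert[C]_{kj}\vert$ and the quadratic term by $\delta^2\Vert C\Vert_{1,1}$, followed by taking expectations and evaluating at the minimizer $\widetilde{C}$. The only cosmetic difference is that the paper first records the pointwise bound $2\delta\Vert C\bm{x}\Vert_1$ and then bounds its empirical expectation by $2\delta\sum_{j,k}\omega_k\vert[C]_{kj}\vert$ in a separate step, whereas you fold both into one uniform bound; the content is identical.
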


Proposition \ref{adversarialrobustness} implies that the adversarially robust loss of the estimation $\widetilde{C}$ obtained from problem \eqref{adaptive} under $\ell_p$-perturbation, which is defined as the worst-case loss under adversarial perturbations posed to the samples, is upper bounded. This property is due to the nature of adversarial training, the motivation behind our proposed problem, and the way we formulate problem \eqref{adaptive} as an upper-bound surrogate.
\subsection{High-dimensional Consistency}\label{varianceadaptive}
This subsection will investigate the high-dimensional convergence rate and model selection consistency for the estimation obtained from  \eqref{adaptive}.
The analysis is based on the primal-dual witness technique innovated  by \cite{wainwright2009sharp}. Our analysis extends the primal-dual witness technique to scenarios involving data-dependent and non-uniform penalty parameters.

We first reformulate problem \eqref{adaptive} equivalently as follows
\begin{equation}\label{highdimformulation}\widetilde{C}\in\arg\inf_{ C\succ 0}\left\{-\log \mathrm{det}  C + \mathbb{E}_{\mathbb{P}_n}[\bm{x}^\top C\bm{x}]+  \sum_{k\neq j} \hat{\lambda}_{kj} \left\vert C_{kj}\right\vert\right\},\end{equation}
\[\hat{\lambda}_{kj}= \delta \mathbb{E}_{\mathbb{P}_n}\left[\left\vert\bm{x}_k\right\vert\right] +\delta \mathbb{E}_{\mathbb{P}_n}\left[\left\vert\bm{x}_j\right\vert\right] + \delta^2,\]
where we do not penalize the diagonals to simplify the analysis.

We then present several notations and necessary assumptions for our subsequent analysis.

Let the off-diagonal support index set of the ground-truth precision matrix $\Sigma^{-1}$ be $E:=\{(i,j): i\not=j,[\Sigma^{-1}]_{ij}\neq 0, i,j=1,\dots,d\},$
and the augmented support index set with diagonals included is denoted by
$S:=E\cup\{(i,i): i=1,\dots,d\}.$ Define the maximum number of nonzeros per row in $\Sigma^{-1}$ with diagonals counted as $s:=\max_{1\le i\le d}\#\{j:\ [\Sigma^{-1}]_{ij}\neq 0, j=1,\dots,d\}.$
Let $\omega_i^\ast:=\mathbb{E}\left[\left\vert\bm{x}_i\right\vert\right]=\sqrt{2{(\pi)^{-1}}\Sigma_{ii}}$ and define $\lambda^\ast_{ij}:=\delta(\omega_i^\ast+\omega_j^\ast)+\delta^2,  i\neq j, \lambda^\ast_{ii}:=0, i,j=1,\dots,d$. 
Let $\Lambda^\ast$ be the symmetric matrix with entries $\lambda^\ast_{ij}$.
Let $\sigma_{\max}:=\max_{1\leq i\leq d}\sqrt{\Sigma_{ii}}$,  $\sigma_{\min}:=\min_{1\leq i\leq d}\sqrt{\Sigma_{ii}}$.
Define
$
\omega_{\max}^\ast:=\max_{1\leq i\leq d}\omega_i^\ast= \sqrt{2{(\pi)^{-1}}}\sigma_{\max},$
$\omega_{\min}^\ast:=\min_{1\leq i\leq d}\omega_i^\ast= \sqrt{2{(\pi)^{-1}}}\sigma_{\min}.
$
We also let Hessian take the form $\Gamma=\Sigma\otimes\Sigma$.

\begin{assumption}[Covariance Control]\label{covariancecontrol}
\[
\kappa_\Gamma:=\Vert (\Gamma_{SS})^{-1}\Vert_\infty<\infty,
\qquad
\kappa_\Sigma:=\Vert\Sigma\Vert_\infty<\infty.
\]
\end{assumption}

\begin{assumption}[Scale-adaptive Incoherence Condition]\label{boundedincoherence} Define $A :=\Gamma_{S^cS} (\Gamma_{SS})^{-1},$
and
    \begin{equation}\label{adaptiveincoherencecondition}\kappa_{A}:=\Vert  A\Vert_\infty <\infty,\quad \psi^\ast
:=\max_{e\in S^c}\sum_{s\in S}|A_{es}|
\frac{\lambda^\ast_s}{\lambda^\ast_e}\leq 1-\alpha,\quad\alpha\in(0,1).\end{equation}
\end{assumption}

\begin{theorem}[Convergence Rate and Model Selection Consistency]
\label{thm:main}

Suppose Assumptions~\ref{covariancecontrol}--\ref{boundedincoherence} hold.
Fix any $\tau>2$ and set
\[
\delta=c_\delta\sqrt{\frac{\log d}{n}},
\qquad 
c_\delta = \frac{\sqrt{\tau}}{\alpha}\max\left\{
16 C_2\sigma_{\max},
\ \ \frac{2\sqrt{2\pi}(1+\kappa_A)C_1\sigma_{\max}^{ 2}}{\sigma_{\min}}
\right\},
\]
where  $C_1, C_2$ are some universal constants.
Define 
the constant
$B:=C_1\sigma_{\max}^{2}\sqrt{\tau} + 3\sqrt{2{(\pi)^{-1}}}\sigma_{\max} c_\delta + c_\delta^{2}.
$
Assume the sample size satisfies
\[
n  \geq \max\left\{ 2\pi C_2^2\left(\frac{\sigma_{\max}}{\sigma_{\min}}\right)^2\tau , 36\kappa_\Gamma^4\kappa_\Sigma^6B^2s^2,\left(
\frac{12\sqrt{2\pi}(1+\kappa_A)\kappa_\Sigma^3 \kappa_\Gamma^2B^2}{\alpha\sigma_{\min}  c_\delta}
\right)^{2}s^2
\right\}\log d.
\]

Then, for the solution $\widetilde{C}$ to problem \eqref{highdimformulation},  with probability at least $\ 1-4d^{2-\tau}$,
the following holds:

\[
\|\widetilde C-\Sigma^{-1}\|_{\max}\ \le\ 2 \kappa_\Gamma B \sqrt{\frac{\log d}{n}},\quad
\widetilde C_{S^c}=0.
\]

 If  the inequality
\[ \min_{(i,j)\in E}|[\Sigma^{-1}]_{ij}|
\ \ge\ 4 \kappa_\Gamma B \sqrt{\frac{\log d}{n}}\]
holds, then $\mathrm{sign}(\widetilde C_{ij})=\mathrm{sign}([\Sigma^{-1}]_{ij})$ for all $(i,j)\in E$.

\end{theorem}

Theorem \ref{thm:main} establishes an error rate of order $\sqrt{\log d/n}$, support recovery, and sign consistency. While these rates align with classic results for the $\ell_1$-regularized estimation \citep{ravikumar2011high}, the conditions required to achieve them are different.

In the classic $\ell_1$-regularized precision matrix estimation, a uniform penalty is applied to all entries. 
The associated model selection consistency is  established under the  incoherence condition: $$\mu^\ast := \max_{e\in S^c} \sum_{s\in S} |A_{es}| \leq 1 - \alpha,\quad \alpha\in(0,1).$$ This condition requires that no null edge is highly correlated with the set of true edges. 
Our perturbed estimation replaces this uniform requirement with the scale-adaptive incoherence condition, as shown in Assumption \ref{boundedincoherence}.
The distinction lies in the penalty ratio $\lambda^*_s/\lambda^*_e$, which essentially compares the marginal standard deviation of the variables involved in a true edge versus a null edge. 

We claim that  the scale-adaptive incoherence condition embedded in our perturbed estimation \eqref{highdimformulation} is superior over the classic one in the $\ell_1$-regularized estimation under heteroscedastic settings. This can be seen from the following proposition.


\begin{proposition}
\label{prop:scale-inflation-mitigation}
Suppose the covariance matrix $\Sigma$ admits the decomposition
$\Sigma = D R D$, where $D=\mathrm{diag}(\sqrt{\Sigma_{11}},\dots,\sqrt{\Sigma_{dd}})$. 
Let $\Gamma^{\dagger}:=R\otimes R$. Define
\[
A^{\dagger} :=  \Gamma_{S^cS}^{\dagger}(\Gamma_{SS}^{\dagger})^{-1}.
\]
If $10\delta<\sqrt{2(\pi)^{-1}}\sigma_{\min}$, then
\[
\mu^\ast=\max_{e\in S^c} \sum_{s\in S} |A_{es}|\leq 
\Big(\frac{\sigma_{\max}}{\sigma_{\min}}\Big)^2\| A^{\dagger}\|_\infty,\quad
\psi^\ast=\max_{e\in S^c}\sum_{s\in S}|A_{es}|
\frac{\lambda^\ast_s}{\lambda^\ast_e}\le\frac{21}{20}\Big(\frac{\sigma_{\max}}{\sigma_{\min}}\Big)\| A^{\dagger}\|_\infty.
\]
\end{proposition}

Compared to the standard incoherence $\mu^\ast$, the scale-adaptive version $\psi^\ast$ reduces the incoherence upper bound by a factor of $\sigma_{\max}(\sigma_{\min})^{-1}$, indicating a possible relaxation of the incoherence condition under the heteroscedastic settings. The improvement will be more significant as the difference between the maximum marginal standard deviation and the minimum one becomes larger.

\section{Regularization Effect}\label{regeffect}
We discuss the regularization effect induced by the adversarial perturbation in the proposed framework.

We characterize this regularization effect in the following proposition.
 \begin{proposition}[Regularization Effect]\label{regularizationprop} Suppose the random variable $\bm{x}$ follows a distribution $\bm{P}$ with $\mathbb{E}_{\bm{P}}[\Vert\bm{x}\Vert_p]<\infty$ and $\bm{P}([ C\bm {x}]_i=0)=0,i=1,\ldots,d$,  $ C$ is a positive definite matrix with $\Vert  C\Vert_{p\to q}<\infty$, and $p\in(1,\infty]$ with  $1/p+1/q=1$. Then, we have that 
\begin{equation*}\mathbb{E}_{\bm{P}}\left[\max_{\Vert \bm{\Delta}\Vert_p\leq \delta}(\bm{x}+\bm{\Delta})^\top  C (\bm{x}+\bm{\Delta})\right]= \mathbb{E}_{\bm{P}} [\bm{x}^\top  C \bm{x}] + 2\delta\mathbb{E}_{\bm{P}} \left[ \Vert  C\bm {x}\Vert_{q}\right]+ \delta^2 \mathbb{E}_{\bm{P}} \left[ \bm{v}_{ C\bm {x}}^\top  C \bm{v}_{ C\bm {x}}\right]+o(\delta^2),\end{equation*}
as $\delta\to 0$, where 
\begin{equation}\label{pairing}\bm{v}_{ C\bm {x}}=\frac{\mathrm{sign}( C\bm {x}) \odot \vert  C\bm {x}\vert^{q-1}}{\Vert  C\bm {x}\Vert_q^{q-1}}.\end{equation}
\end{proposition}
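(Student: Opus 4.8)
The plan is to establish the expansion pointwise in $\bm x$ and then integrate, controlling the remainder by dominated convergence. First I would fix $\bm x$ with $[C\bm x]_i\neq 0$ for every $i$ --- which holds $\bm P$-almost surely by the hypothesis $\bm P([C\bm x]_i=0)=0$ --- and reduce the inner problem: expanding $(\bm x+\bm\Delta)^\top C(\bm x+\bm\Delta)=\bm x^\top C\bm x+2\bm\Delta^\top C\bm x+\bm\Delta^\top C\bm\Delta$ and substituting $\bm\Delta=\delta\bm u$ with $\|\bm u\|_p\le 1$ leaves me with $M(\delta):=\max_{\|\bm u\|_p\le 1}\big(2\delta\,\bm u^\top\bm y+\delta^2\,\bm u^\top C\bm u\big)$, where $\bm y:=C\bm x$. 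The leading linear part is the dual-norm identity $\max_{\|\bm u\|_p\le 1}\bm u^\top\bm y=\|\bm y\|_q$, attained at $\bm u=\bm v_{\bm y}$ from \eqref{pairing}; this maximizer is \emph{unique} precisely because $\bm y$ has no zero coordinate (strict convexity of the $\ell_p$-ball for $p\in(1,\infty)$; the single exposed vertex $\sign(\bm y)$ of the cube for $p=\infty$, which is the $q=1$ case of \eqref{pairing}).

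Next I would extract the second-order term. Let $\bm u^\star(\delta)$ maximize $M(\delta)$ (a maximizer exists by compactness and continuity) and write $\bm u^\star(\delta)=\bm v_{\bm y}+\bm h(\delta)$. Dividing the objective by $2\delta$ and letting $\delta\to 0$, a routine compactness argument together with the uniqueness above forces $\bm h(\delta)\to\bm 0$. Feasibility of $\bm v_{\bm y}$ gives the lower bound $M(\delta)\ge 2\delta\|\bm y\|_q+\delta^2\,\bm v_{\bm y}^\top C\bm v_{\bm y}$. For the matching upper bound I expand $M(\delta)=2\delta\|\bm y\|_q+2\delta\,\bm h^\top\bm y+\delta^2\,\bm v_{\bm y}^\top C\bm v_{\bm y}+O(\delta^2\|\bm h\|)$; optimality of $\bm v_{\bm y}$ for the linear functional over the unit ball gives $\bm h^\top\bm y\le 0$, while the lower bound rearranges to $\bm h^\top\bm y\ge -O(\delta\|\bm h\|)$, so $\bm h^\top\bm y=o(\delta)$ and hence $M(\delta)=2\delta\|\bm y\|_q+\delta^2\,\bm v_{\bm y}^\top C\bm v_{\bm y}+o(\delta^2)$ for this fixed $\bm x$.

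Finally I would pass to the expectation. Writing $R_\delta(\bm x):=M(\delta)-2\delta\|C\bm x\|_q-\delta^2\,\bm v_{C\bm x}^\top C\bm v_{C\bm x}$, feasibility of $\bm v_{C\bm x}$ gives $R_\delta(\bm x)\ge 0$, while the crude bounds $2\bm\Delta^\top C\bm x\le 2\delta\|C\bm x\|_q$ and $\bm\Delta^\top C\bm\Delta\le\delta^2\|C\|_{p\to q}$ on the unit ball, together with $\bm v_{C\bm x}^\top C\bm v_{C\bm x}>0$, give $R_\delta(\bm x)\le\delta^2\|C\|_{p\to q}$; thus $R_\delta(\bm x)/\delta^2$ is dominated by the constant $\|C\|_{p\to q}$. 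Since $\mathbb{E}_{\bm P}\|C\bm x\|_q\le\|C\|_{p\to q}\mathbb{E}_{\bm P}\|\bm x\|_p<\infty$ and $\bm v_{C\bm x}^\top C\bm v_{C\bm x}\le\|C\|_{p\to q}$, the leading terms are finite and the dominated convergence theorem yields $\mathbb{E}_{\bm P}[R_\delta]=o(\delta^2)$, which is exactly the claimed expansion.

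I expect the main obstacle to be the second paragraph: showing the maximizer $\bm u^\star(\delta)$ stabilizes at $\bm v_{C\bm x}$ and, crucially, that the cross term $\bm h^\top C\bm x$ is $o(\delta)$ rather than merely $O(\delta)$. This is exactly where the hypothesis $\bm P([C\bm x]_i=0)=0$ is indispensable --- it guarantees uniqueness of the dual-norm maximizer --- and it is also where the restriction $p>1$ bites, since for $p=1$ the dual maximizer is generically non-unique, $\bm h(\delta)$ need not go to zero, and the second-order term is not well defined.
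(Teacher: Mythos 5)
Your proposal is correct, and it follows the same overall skeleton as the paper (pointwise second-order expansion of the inner maximum in $\delta$, followed by dominated convergence), but the two key steps are executed differently. For the pointwise expansion the paper splits into two lemmas: for $p\in(1,\infty)$ it establishes uniqueness of the maximizer via the KKT system and then invokes Danskin's theorem to differentiate the value function $\phi_{b,Q}(\varepsilon)=\max_{\Vert\Delta\Vert_p\le 1}\{b^\top\Delta+\varepsilon\Delta^\top Q\Delta\}$ at $\varepsilon=0$, while for $p=\infty$ it argues that the maximizer is exactly $\sign(b)$ for small $\varepsilon$. Your sandwich argument --- feasibility of $\bm v_{\bm y}$ for the lower bound, convergence $\bm u^\star(\delta)\to\bm v_{\bm y}$ plus the sign constraint $\bm h^\top\bm y\le 0$ for the upper bound --- is more elementary, treats $p\in(1,\infty)$ and $p=\infty$ in one stroke, and correctly isolates where $\bm P([C\bm x]_i=0)=0$ is needed (uniqueness for $p=\infty$; for $p\in(1,\infty)$ strict convexity of the ball already gives uniqueness once $C\bm x\neq\bm 0$, which positive definiteness guarantees a.s.). Your domination step is also tighter: the two-sided bound $0\le R_\delta(\bm x)\le\delta^2\Vert C\Vert_{p\to q}$ is uniform in $\bm x$ and $\delta$, so bounded convergence applies directly to $R_\delta/\delta^2$; the paper instead bounds its remainder $\mathcal{R}(\bm\omega)$ by $4\Vert C\Vert_{p\to q}\Vert\bm x(\bm\omega)\Vert_p+2\delta\Vert C\Vert_{p\to q}$ and applies dominated convergence to $\mathcal{R}/\delta$, for which that bound is not $\delta$-uniform after division --- your uniform bound sidesteps this issue entirely. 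The one cosmetic imprecision is attributing uniqueness of the dual-norm maximizer for $p\in(1,\infty)$ to the no-zero-coordinate condition rather than to $C\bm x\neq\bm 0$, but this does not affect the argument.
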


\begin{remark}Since the term $-\log \det C$ in the proposed framework \eqref{intro} is deterministic, Proposition \ref{regularizationprop} focuses on the inner expectation in \eqref{intro} to analyze the induced regularization under $\ell_p$-perturbation.  Instead of only analyzing on the empirical distribution, we consider a more general distribution $\bm{P}$ satisfying certain regularity conditions. The stated regularity conditions in Proposition \ref{regularizationprop} ensure well-posedness and integrability.\end{remark}
\begin{example}\label{example}
We provide the examples for the cases when $p=2$ and $p=\infty$ in Proposition \ref{regularizationprop}: 
\begin{equation*}\begin{aligned}&\mathbb{E}_{\bm{P}}\left[\max_{\Vert \bm{\Delta}\Vert_\infty\leq \delta}(\bm{x}+\bm{\Delta})^\top  C (\bm{x}+\bm{\Delta})\right]\\
=&\mathbb{E}_{\bm{P}} [\bm{x}^\top  C \bm{x}] + 2\delta\mathbb{E}_{\bm{P}} \left[ \Vert  C\bm {x}\Vert_{1}\right]+ \delta^2 \mathbb{E}_{\bm{P}} \left[ \mathrm{sign}( C\bm {x})^\top  C \mathrm{sign}( C\bm {x})\right]+o(\delta^2),\end{aligned}\end{equation*}

\begin{equation*}\begin{aligned}&\mathbb{E}_{\bm{P}}\left[\max_{\Vert \bm{\Delta}\Vert_2\leq \delta}(\bm{x}+\bm{\Delta})^\top  C (\bm{x}+\bm{\Delta})\right]\\
=&\mathbb{E}_{\bm{P}} [\bm{x}^\top  C \bm{x}] + 2\delta\mathbb{E}_{\bm{P}} \left[ \Vert  C\bm {x}\Vert_{2}\right]+ \delta^2 \mathbb{E}_{\bm{P}} \left[ \frac{\bm{x}  C^3\bm{x}}{\bm{x}  C^2\bm{x}}\right]+o(\delta^2),\end{aligned}\end{equation*}
as $\delta\to 0$.
\end{example}

Proposition \ref{regularizationprop} reveals that the proposed adversarially robust precision matrix framework is equivalent to the regularized precision matrix estimation in the asymptotic sense. For small perturbations, the worst-case loss inflates by a first-order term $2\delta\mathbb{E}_{\bm{P}} [ \Vert C\bm {x}\Vert_{q}]$, a second-order term $\delta^2 \mathbb{E}_{\bm{P}} [ \bm{v}_{C\bm {x}}^\top C \bm{v}_{C\bm {x}}]$, and a higher-order residual.

The first-order regularization term arises from Hölder's inequality and penalizes the $\ell_q$ norm of the gradient direction, where $1/p+1/q=1$. The vector $\bm{v}_{C\bm {x}}$ in the second-order term is the $\ell_p$-unit vector that achieves the associated Hölder dual pairing, as defined in \eqref{pairing}.  The second-order regularization term reflects the local curvature of the worst-case loss. Specifically, this term corresponds to the mean of the Rayleigh quotient of $C$ evaluated with respect to the vector $C\bm {x}$ (when $p=2$) or $\mathrm{sign}(C\bm {x})$ (when $p=\infty$). If $C$ has the eigen-decomposition $C = \sum_{j} \lambda_j \bm{u}_j \bm{u}_j^\top$, the associated Rayleigh quotient represents a weighted average of the eigenvalues. For $p=2$, it takes the form: \begin{equation}\label{weight}\frac{\bm{x}^\top C^3\bm{x}}{\bm{x}^\top C^2\bm{x}} =\sum_j \lambda_j \left( \frac{\lambda_j^2 (\bm{u}_j^\top \bm{x})^2}{\sum_k \lambda_k^2 (\bm{u}_k^\top \bm{x})^2} \right).\end{equation}
Similarly, for $p=\infty$:$$\mathrm{sign}(C\bm {x})^\top C\mathrm{sign}(C\bm {x}) = \sum_j \lambda_j  (\bm{u}_j^\top \mathrm{sign}(C\bm {x}))^2.$$Thus, the second-order regularization implies that adversarial perturbation induces a shrinkage effect on the eigenvalues of $C$. Furthermore, as seen in \eqref{weight} for the $p=2$ case, this shrinkage effect is stronger on larger eigenvalues due to the $\lambda^2$ weighting.

We could establish the asymptotic distribution of the adversarially perturbed precision estimation utilizing the regularization effect in  Proposition \ref{regularizationprop}.

\section{Asymptotic Distribution}\label{asynsec}
This section will provide the asymptotic distribution of the adversarially perturbed estimator under different settings.

Recall that the proposed adversarially perturbed estimation is the solution to the following optimization problem:
\begin{equation}\label{section5form} \widehat{ C}\in \arg\inf_{ C\succ 0}\left\{ -\log \mathrm{det}  C +\mathbb{E}_{\mathbb{P}_n} \left[\max_{\Vert \bm{\Delta}\Vert_p \leq \delta}(\bm{x}+\bm{\Delta})^\top  C (\bm{x}+\bm{\Delta})\right]\right\}.\end{equation}

Regarding the perturbation magnitude, we consider a regime where the perturbation level $\delta$ is small and decays as the sample size $n$ increases. Formally, we let $\delta = \delta_n=\eta n^{-\gamma}$ with constants $\eta,\gamma>0$.

The asymptotic distribution of the perturbed estimation $\widehat{C}$ admits an explicit expression as detailed below.
\begin{theorem}[Asymptotic Distribution]\label{asymptoticresults1}
For problem \eqref{section5form}, suppose $p\in(1,\infty]$, and $\delta_n=\eta n^{-\gamma}, \gamma,\eta>0$, then we have that 
\begin{itemize}
    \item  If $0<\gamma<1/2$,
    \[ n^{\gamma}\left(\widehat{ C}-\Sigma^{-1} \right)\Rightarrow -2\eta\Sigma^{-1}  
\mathbb{E}\left[\frac{\left( \mathrm{sign}(\Sigma^{-1}\bm{x}) \odot |\Sigma^{-1}\bm{x}|^{q-1} \right) \bm{x}^\top }{\Vert\Sigma^{-1}\bm{x}\Vert_q^{q-1}}\right]\Sigma^{-1};\]
    \item If $\gamma=1/2$, 
    \[ \sqrt{n}\left(\widehat{ C}-\Sigma^{-1} \right)\Rightarrow  \Sigma^{-1}  
\left(\mathbf{G} -2\eta\mathbb{E}\left[\frac{\left( \mathrm{sign}(\Sigma^{-1}\bm{x}) \odot |\Sigma^{-1}\bm{x}|^{q-1} \right) \bm{x}^\top }{\Vert\Sigma^{-1}\bm{x}\Vert_q^{q-1}}\right]\right) \Sigma^{-1};\]
    
    \item If $\gamma> 1/2$,
     \[ \sqrt{n}\left(\widehat{ C}-\Sigma^{-1} \right)\Rightarrow  \Sigma^{-1}  
\mathbf{G}   \Sigma^{-1},\]    
\end{itemize}
where  $\bm{x}\sim \mathcal{N}(\bm{0},\Sigma)$, $\mathrm{vec} (\mathbf{G})\sim \mathcal{N}(\bm{0}, \Lambda)$, and $\Lambda$ is such that 
\[\mathrm{cov}([\mathbf{G}]_{ij},[\mathbf{G}]_{i^\prime j^\prime})=\mathrm{cov}(\bm{x}_i\bm{x}_j, \bm{x}_{i^\prime}\bm{x}_{j^\prime}).\]
\end{theorem}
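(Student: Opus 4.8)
The plan is to treat $\widehat C$ as an M-estimator whose objective differs from the classical (maximum likelihood) precision objective by the small perturbation term, and to expand that perturbation term using Proposition \ref{regularizationprop}. Concretely, write the empirical objective of \eqref{section5form} as
\begin{equation*}
Q_n(C) = -\log\det C + \mathbb{E}_{\mathbb{P}_n}[\bm{x}^\top C\bm{x}] + 2\delta_n\,\mathbb{E}_{\mathbb{P}_n}\!\left[\Vert C\bm{x}\Vert_q\right] + \delta_n^2\,\mathbb{E}_{\mathbb{P}_n}\!\left[\bm{v}_{C\bm{x}}^\top C\,\bm{v}_{C\bm{x}}\right] + o(\delta_n^2),
\end{equation*}
valid uniformly on a neighborhood of $\Sigma^{-1}$. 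First I would localize: since the unperturbed objective is strongly convex near its minimizer (its Hessian in $\mathrm{vec}(C)$ is $\Sigma\otimes\Sigma$, which is positive definite), and the perturbation is of vanishing order, $\widehat C$ is consistent for $\Sigma^{-1}$ and lies in a shrinking neighborhood with probability tending to one. Then I would set up the standard stochastic-equicontinuity / Taylor argument for M-estimators: at the minimizer the (sub)gradient of $Q_n$ vanishes, and a first-order expansion of the gradient around $\Sigma^{-1}$ gives
\begin{equation*}
(\Sigma\otimes\Sigma)\,\mathrm{vec}(\widehat C - \Sigma^{-1}) \approx -\,\mathrm{vec}\big(\mathbb{E}_{\mathbb{P}_n}[\bm{x}\bm{x}^\top] - \Sigma\big) - 2\delta_n\,\mathrm{vec}\big(\text{gradient of the first-order penalty at }\Sigma^{-1}\big) + (\text{lower order}).
\end{equation*}

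The two driving terms are: the CLT term $\sqrt n\,\mathrm{vec}(\mathbb{E}_{\mathbb{P}_n}[\bm{x}\bm{x}^\top]-\Sigma)\Rightarrow \mathrm{vec}(\mathbf G)$ with the stated covariance $\Lambda$ (this is just the multivariate CLT for the i.i.d. products $[\bm{x}]_i[\bm{x}]_j$, whose covariance structure for Gaussian $\bm{x}$ is exactly $\mathrm{cov}([\bm{x}]_i[\bm{x}]_j,[\bm{x}]_{i'}[\bm{x}]_{j'})$); and the bias term from differentiating $\mathbb{E}_{\mathbb{P}_n}[\Vert C\bm{x}\Vert_q]$ with respect to $C$, whose expectation at $C=\Sigma^{-1}$ is $\mathbb{E}\big[(\mathrm{sign}(\Sigma^{-1}\bm{x})\odot|\Sigma^{-1}\bm{x}|^{q-1})\bm{x}^\top / \Vert\Sigma^{-1}\bm{x}\Vert_q^{q-1}\big]$ — this is precisely the matrix derivative of $\Vert C\bm{x}\Vert_q$ given the pairing vector in \eqref{pairing}. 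Multiplying through by $(\Sigma\otimes\Sigma)^{-1}=\Sigma^{-1}\otimes\Sigma^{-1}$ turns $\mathrm{vec}$-level identities back into the sandwiched matrix expressions $\Sigma^{-1}(\cdot)\Sigma^{-1}$. The three regimes then fall out by balancing rates: if $\gamma>1/2$, $\sqrt n\delta_n\to 0$ and only the CLT term survives at rate $\sqrt n$; if $\gamma=1/2$, $\sqrt n\delta_n\to\eta$ and both terms contribute at rate $\sqrt n$; if $\gamma<1/2$, $\sqrt n\delta_n\to\infty$, so the deterministic bias dominates and the correct normalization is $n^\gamma$, under which the CLT term vanishes and only $-2\eta\Sigma^{-1}\mathbb{E}[\cdots]\Sigma^{-1}$ remains. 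The $\delta_n^2$ curvature term always contributes at order $\delta_n^2 = o(\delta_n) = o(n^{-\gamma})$ and is negligible in every regime, which is why it does not appear in the limit.

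The main obstacle is handling the nonsmoothness and the interchange of differentiation and expectation in the first-order penalty $C\mapsto\mathbb{E}_{\mathbb{P}}[\Vert C\bm{x}\Vert_q]$ rigorously — the map $\bm{z}\mapsto\Vert\bm{z}\Vert_q$ is differentiable away from the coordinate hyperplanes only when $q<\infty$ (i.e. $p>1$), and for $q=\infty$ ($p=1$, excluded here) it would be genuinely nonsmooth; the hypothesis $p\in(1,\infty]$ and the condition $\bm{P}([C\bm{x}]_i=0)=0$ from Proposition \ref{regularizationprop} are exactly what make the population penalty continuously differentiable with derivative equal to the expected pointwise derivative, by a dominated-convergence argument using $\mathbb{E}_{\bm{P}}[\Vert\bm{x}\Vert_p]<\infty$. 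I would also need a uniform-in-$C$ version of the expansion in Proposition \ref{regularizationprop} so that the remainder is $o_P(\delta_n^2)$ uniformly on the localizing neighborhood; this follows from the local Lipschitz/continuity properties of the relevant maps plus the consistency of $\widehat C$. The remaining steps — the empirical-process bound showing the sample penalty gradient converges to its population counterpart, and the argmax/argmin continuous-mapping step converting the expansion into weak convergence — are standard. For the $\gamma=1/2$ sparsity statement (Corollary \ref{sparsityrecovery}, $p=\infty$), I note this is deferred to Theorem \ref{asymptoticdistributionsurrogate} and Proposition 3.8 of \cite{xie2025asymptotic} and is not part of proving the present theorem.
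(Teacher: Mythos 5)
Your proposal is correct and follows essentially the same route as the paper's proof: localize around $\Sigma^{-1}$, expand the perturbed objective via Proposition \ref{regularizationprop} (with the $-\log\det$ second-order expansion and the CLT for $\sqrt{n}(\bar A-\Sigma)$ supplying the quadratic and Gaussian terms), identify the first-order penalty gradient as the bias, discard the $O(\delta_n^2)$ curvature term, and balance rates across the three regimes; the paper packages this as convergence of the rescaled objective process $V_n(\bm{U})$ to a strongly convex limit $V(\bm{U})$ whose explicit minimizer gives the sandwich formula, which is the same first-order-condition computation you describe. Your remarks on the uniformity of the remainder and the differentiability of the $\ell_q$ penalty are, if anything, more careful than the paper's treatment.
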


The results in Theorem \ref{asymptoticresults1} have some implications.

Firstly,  the proposed perturbed estimation $\widehat{C}$ is a consistent estimator of the precision matrix $\Sigma^{-1}$, i.e., $\widehat{ C}\to_p \Sigma^{-1}$,  when the perturbation magnitude shrinks as the sample size increases.
Then, the proposed perturbed estimation has a convergence order of $\sqrt{n}$ when $\gamma\geq 1/2$ and a slower convergence order of $n^{\gamma}$ when $0<\gamma<1/2$.

Secondly, when we focus on cases where $\gamma\geq 1/2$, the perturbed estimation converges in distribution to a normal distribution.

Thirdly, the perturbation magnitude decays fast enough that the bias induced by the adversarial perturbation becomes asymptotically negligible when $\gamma > 1/2$. In this regime, the estimation $\widehat{C}$ converges to a zero-mean normal distribution identical to that of the unperturbed inverse sample covariance, i.e., the maximum likelihood estimation from problem \eqref{orig}.
This establishes a theoretical threshold: while adversarial training typically introduces a trade-off between robustness and accuracy \citep{tsipras2019robustness}, accuracy and perturbation are not in conflict when the perturbation order exceeds $1/2$ in our proposed adversarially perturbed precision matrix estimation framework.

Recall that we focus on the surrogate objective function instead when we discuss the $\ell_\infty$-perturbed estimation in Section \ref{ellinf}, as shown in the following:
\begin{equation}\label{sec5surr}\widetilde{C}\in\arg\inf_{ C\succ 0}\left\{-\log \mathrm{det}  C + \mathbb{E}_{\mathbb{P}_n}\left[\bm{x}^\top  C\bm{x}\right]+ 2\delta\sum_{j=1}^d \sum_{k=1}^d  \omega_k \vert C_{kj}\vert+ \delta^2 \Vert  C\Vert_{1,1}\right\}. \end{equation}

In contrast to the asymptotic results in Theorem \ref{asymptoticdistributionsurrogate}, the asymptotic distribution of the estimation $\widetilde{C}$ in problem \eqref{sec5surr} does not admit a simple closed form but can be characterized as the minimizer of a stochastic process, as shown in the following theorem.

\begin{theorem}[Asymptotic Distribution when $p=\infty$ with Surrogate Objective] \label{asymptoticdistributionsurrogate}
    If $\delta_n=\eta n^{-\gamma}, \eta,\gamma>0$, for the estimation $\widetilde{ C}$ obtained from problem \eqref{sec5surr}, we have that 
\[\sqrt{n}\left(\widetilde{ C}-\Sigma^{-1} \right)\Rightarrow \arg\min_{\bm{U}} W(\bm{U}).\]

\begin{itemize}
    \item  If $0<\gamma<1/2$,
\begin{equation*}\begin{aligned} W(\bm{U}) =& \frac{1}{2}\mathrm{tr}(\bm{U}\Sigma \bm{U}\Sigma)\\&+  \sqrt{\frac{8}{\pi}}\eta  \sum_{j=1}^d \sum_{k=1}^d    \Sigma_{kk} \left( \bm{U}_{jk}\mathrm{sign}([\Sigma^{-1}]_{jk})\mathbb{I}([\Sigma^{-1}]_{jk}\not = 0)+\vert  \bm{U}_{jk} \vert \mathbb{I}([\Sigma^{-1}]_{jk} = 0)\right).\end{aligned}\end{equation*}
    \item If $\gamma=1/2$, 
\begin{equation*}\begin{aligned} W(\bm{U}) =& -\mathrm{tr}(\bm{U}\mathbf{G}) + \frac{1}{2}\mathrm{tr}(\bm{U}\Sigma \bm{U}\Sigma)\\&+ \sqrt{\frac{8}{\pi}}\eta  \sum_{j=1}^d \sum_{k=1}^d    \Sigma_{kk} \left( \bm{U}_{jk}\mathrm{sign}([\Sigma^{-1}]_{jk})\mathbb{I}([\Sigma^{-1}]_{jk}\not = 0)+\vert  \bm{U}_{jk} \vert \mathbb{I}([\Sigma^{-1}]_{jk} = 0)\right).\end{aligned}\end{equation*}
    
    \item If $\gamma> 1/2$,
\[W(\bm{U})  = -\mathrm{tr}(\bm{U}\mathbf{G}) +\frac{1}{2}\mathrm{tr}(\bm{U}\Sigma \bm{U}\Sigma),\]    
i.e.,
     \[ \sqrt{n}\left(\widetilde{ C}-\Sigma^{-1} \right)\Rightarrow  \Sigma^{-1}  
\mathbf{G}   \Sigma^{-1},\]   
\end{itemize}
where  $\mathrm{vec} (\mathbf{G})\sim \mathcal{N}(\bm{0}, \Lambda)$, and $\Lambda$ is such that 
\[\mathrm{cov}([\mathbf{G}]_{ij},[\mathbf{G}]_{i^\prime j^\prime})=\mathrm{cov}(\bm{x}_i\bm{x}_j, \bm{x}_{i^\prime}\bm{x}_{j^\prime}).\]
\end{theorem}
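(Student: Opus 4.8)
The plan is to use the localization / ``argmin of a convex process'' technique, in the spirit of \cite{xie2025asymptotic}. Write $\mathcal{S}_n(\cdot)$ for the convex objective in \eqref{sec5surr}, put $\widehat{\Sigma}=\tfrac1n\sum_{i=1}^n\bm{x}_i\bm{x}_i^\top$, and take the localization rate $r_n=n^{\min(\gamma,1/2)}$ --- equal to $\sqrt n$ when $\gamma\ge 1/2$, where the penalty is subdominant and the CLT fluctuation of $\widehat{\Sigma}$ sets the scale, and to $n^{\gamma}$ when $0<\gamma<1/2$, where the penalty dominates and balances the $O(r_n^{-2})$ curvature of $-\log\det$. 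I would then reparametrize $C=\Sigma^{-1}+\bm{U}/r_n$ and set $W_n(\bm{U})=r_n^{2}\bigl(\mathcal{S}_n(\Sigma^{-1}+\bm{U}/r_n)-\mathcal{S}_n(\Sigma^{-1})\bigr)$, so that $r_n(\widetilde{C}-\Sigma^{-1})=\arg\min_{\bm{U}}W_n(\bm{U})$. Each $W_n$ is convex in $\bm{U}$ (a log-determinant barrier composed with an affine map, a linear term, and weighted $\ell_1$ terms are all convex), and $W_n$ is eventually real-valued on any fixed compact set of $\bm{U}$'s since $\Sigma^{-1}$ lies in the interior of the positive-definite cone. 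The argument then reduces to two ingredients: (i) pointwise-in-$\bm{U}$ convergence $W_n(\bm{U})\Rightarrow W(\bm{U})$; and (ii) $W$ is strictly convex, hence uniquely minimized, and coercive. Given (i)--(ii), the classical convexity lemma --- pointwise convergence of convex functions forces locally uniform convergence, which upgrades to convergence of argminima --- yields $\arg\min W_n\Rightarrow\arg\min W$, and coercivity of $W$ additionally supplies the tightness $r_n(\widetilde{C}-\Sigma^{-1})=O_p(1)$ that legitimizes the localization.

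For (i) I would expand $W_n$ term by term. The smooth part $-\log\det C+\mathrm{tr}(\widehat{\Sigma}C)$ has a second-order Taylor expansion around $\Sigma^{-1}$: using $\nabla(-\log\det C)=-C^{-1}$ and Hessian quadratic form $\mathrm{tr}(C^{-1}\bm{U}C^{-1}\bm{U})$, this group contributes $r_n\,\mathrm{tr}\bigl((\widehat{\Sigma}-\Sigma)\bm{U}\bigr)+\tfrac12\mathrm{tr}(\bm{U}\Sigma\bm{U}\Sigma)+o_p(1)$. When $\gamma\ge 1/2$, the multivariate CLT for the empirical second moments of a Gaussian vector gives $\sqrt n(\widehat{\Sigma}-\Sigma)\Rightarrow\mathbf{G}$ with $\mathrm{cov}([\mathbf{G}]_{ij},[\mathbf{G}]_{i'j'})=\mathrm{cov}([\bm{x}]_i[\bm{x}]_j,[\bm{x}]_{i'}[\bm{x}]_{j'})$, so the linear term tends to $\mathrm{tr}(\mathbf{G}\bm{U})=-\mathrm{tr}(\bm{U}\mathbf{G})$ in distribution (via $\mathbf{G}\stackrel{d}{=}-\mathbf{G}$); when $0<\gamma<1/2$, $r_n(\widehat{\Sigma}-\Sigma)=n^{\gamma-1/2}\cdot\sqrt n(\widehat{\Sigma}-\Sigma)\to_p 0$, so no Gaussian term survives. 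For the adaptive penalty $2\delta_n\sum_{j,k}\omega_k|[C]_{kj}|$, I would use the elementary identity $|a+u/r_n|-|a|=u\,\sign(a)/r_n$ (for $n$ large, when $a\neq 0$) and $=|u|/r_n$ (when $a=0$), applied at $a=[\Sigma^{-1}]_{jk}$, together with $\omega_k\to_{\mathrm{a.s.}}\mathbb{E}|[\bm{x}]_k|$ by the strong law; since $2r_n\delta_n=2\eta\,n^{\min(\gamma,1/2)-\gamma}$ equals $2\eta$ exactly when $\gamma\le 1/2$ and is $o(1)$ when $\gamma>1/2$, this reproduces exactly the penalty term of $W$ in the regimes $\gamma\le 1/2$ and contributes nothing when $\gamma>1/2$. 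Finally, the quadratic penalty $\delta_n^{2}\Vert C\Vert_{1,1}$ contributes $O(r_n\delta_n^{2})=O\bigl(n^{\min(\gamma,1/2)-2\gamma}\bigr)=o(1)$ in every regime, hence drops out. Assembling the pieces yields $W$ in each of the three cases; in particular, for $\gamma>1/2$ the stationarity condition of $-\mathrm{tr}(\bm{U}\mathbf{G})+\tfrac12\mathrm{tr}(\bm{U}\Sigma\bm{U}\Sigma)$ gives $\arg\min W=\Sigma^{-1}\mathbf{G}\Sigma^{-1}$, matching the delta-method limit of the inverse sample covariance. Ingredient (ii) is then immediate, since $W$ equals the strictly convex quadratic $\tfrac12\mathrm{tr}(\bm{U}\Sigma\bm{U}\Sigma)$ (strict because $\Sigma\succ 0$) plus a finite affine term and a finite convex piecewise-linear penalty.

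The main obstacle will be the joint handling of the non-smooth \emph{data-dependent} penalty with the $n$-dependent rate. Two things must be checked uniformly over $\bm{U}$ in compact sets: first, that replacing each random weight $\omega_k$ by its population limit $\mathbb{E}|[\bm{x}]_k|$ does not alter the limit process --- this is controlled through $\bigl|\,|a+u/r_n|-|a|\,\bigr|\le|u|/r_n$, which bounds the induced error by $2r_n\delta_n\cdot\Vert\bm{U}\Vert_1\cdot\max_k\bigl|\omega_k-\mathbb{E}|[\bm{x}]_k|\bigr|=O_p(1)\cdot o_p(1)$; and second, that the zero/nonzero dichotomy of the entries of $\Sigma^{-1}$ assembles so that the bias term enters at the right order and constant exactly when $\gamma\le 1/2$ while the Gaussian fluctuation is retained exactly when $\gamma=1/2$ (where joint convergence of the deterministic penalty part and the random smooth part is automatic by Slutsky). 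The remaining steps --- the log-determinant expansion, the CLT and the strong law, the fact that the positive-definiteness constraint is locally inactive and hence absent from the limit, and the convexity lemma itself --- are routine.
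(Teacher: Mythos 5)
Your proposal is correct and follows essentially the same route as the paper's proof: localize at rate $n^{\min(\gamma,1/2)}$, Taylor-expand the smooth part $-\log\det C+\mathrm{tr}(\widehat{\Sigma}C)$, apply the CLT/LLN to the empirical covariance and the adaptive weights, use the elementary $|a+u/r_n|-|a|$ identity on the weighted $\ell_1$ penalty, discard the $O(r_n\delta_n^2)$ quadratic penalty, and conclude via the convexity (argmin-of-convex-processes) lemma. You are in fact more explicit than the paper about uniformity, tightness, and the replacement of the random weights $\omega_k$ by their population limits, and your computation $2\eta\,\mathbb{E}|[\bm{x}]_k|=\sqrt{8/\pi}\,\eta\,[\Sigma]_{kk}^{1/2}$ agrees with the paper's own proof (the coefficient $[\Sigma]_{kk}$ in the theorem statement appears to be a typo for $[\Sigma]_{kk}^{1/2}$, and your rate $n^{\gamma}$ for $\gamma<1/2$ is the one the paper actually intends, cf.\ its Table and Theorem \ref{asymptoticresults1}).
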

\begin{corollary}[Sparsity Recovery]\label{sparsityrecovery}
Let $\widetilde{C}$ be the minimizer of problem \eqref{adaptive}. Suppose that $\delta = \eta n^{-1/2}$ with $\eta > 0$. Then, as $n \to \infty$, the asymptotic distribution of $\widetilde{C}_{ij}$ has a positive probability mass at $0$ whenever $[\Sigma^{-1}]_{ij} = 0$.
\end{corollary}
\begin{remark}
Corollary \ref{sparsityrecovery} follows immediately from Theorem \ref{asymptoticdistributionsurrogate} and Proposition 3.8 in \cite{xie2025asymptotic}.
\end{remark}

Theorem \ref{asymptoticdistributionsurrogate} shows that the estimation $\widetilde{C}$ in problem \eqref{sec5surr} shares the same asymptotic normal distribution of the estimation $\widehat{C}$ in problem \eqref{section5form}.
However, for $\gamma = 1/2$, the asymptotic distribution of $\widetilde{C}$ is non-normal. The term $\Sigma_{kk}$ presented in Theorem \ref{asymptoticdistributionsurrogate} is resulted from the adversarial perturbation in the framework \eqref{sec5surr}, showing the asymptotic effect of the scale-adaptivity and adversarial robustness discussed in Section \ref{adversarialrobustnesssec} and Section \ref{varianceadaptive}.

In addition, our asymptotic analysis demonstrates that the sparsity recovery can  be achievable when $\gamma = 1/2$, seeing Corollary \ref{sparsityrecovery}. 
Recall the high dimensional finite-sample analysis implies that model selection consistency is obtained when $\delta$ is in the order of $n^{-1/2}$. 
This provides a unification of finite-sample and asymptotic perspectives for $\ell_\infty$-perturbed estimation, demonstrating that both regimes demand an equivalent scaling of $\delta \asymp n^{-1/2}$ to guarantee model selection consistency.

To have a comprehensive understanding and comparison, we conclude some of the key characteristics of the asymptotic distribution of the perturbed estimation in Table \ref{tab}.

\begin{table}[H]
    \centering
    \begin{tabular}{c|c|c|c}
       \hline
        &$p\in(1,\infty]$, problem \eqref{intro}   & $p=\infty$, surrogate, problem \eqref{adaptive} & Rate \\
        \hline
        $\gamma\in(0,1/2)$& constant    & constant & $n^{\gamma}$ \\
        \hline
        $\gamma=1/2$ &  normality with bias   & non-normality with bias, sparsity & $\sqrt{n}$\\
        \hline
        $\gamma>1/2$ &  \multicolumn{2}{c|}{normality without bias} & $\sqrt{n}$ \\
        \hline
    \end{tabular}
    \caption{Asymptotic Distribution of the Perturbed Estimation}
    \label{tab}
\end{table}

\section{Numerical Experiments}\label{numsection}
In this section, we will deliver numerical experiments for the proposed adversarially perturbed precision matrix estimation scheme and compare it with the $\ell_1$-regularized precision matrix estimation \citep{friedman2008sparse}. The experiments are implemented by R. The code url:\url{https://github.com/YilingXie27/Perturbed-Precision-Matrix-Estimation}.

We will focus on the surrogate $\ell_\infty$-perturbed matrix estimation framework  \eqref{adaptive} discussed in Section \ref{ellinf}.
While the $\ell_\infty$-perturbed precision matrix estimation has penalties dependent on the observed samples, the $\ell_1$-regularized precision matrix estimation \citep{friedman2008sparse} has a constant penalty, as shown in the following:
\begin{equation}\label{standard}
\inf_{ C\succ 0}\left\{-\log \mathrm{det}  C + \mathrm{tr}(\bar{A} C)+ \lambda\Vert  C\Vert_{1}\right\}. \end{equation}

In practice, we do not penalize the diagonals for either method to promote more desirable numerical performance for all experiments in this section.

We may use the following metrics to evaluate the numerical performance :
true negative rate (TNR),
true positive rate (TPR), 
accuracy (ACC),
and Matthews Correlation Coefficient (MCC) \citep{guilford1954psychometric}. The metrics are defined as follows:
\[\mathrm{TNR} = \frac{\mathrm{TN}}{\mathrm{TN}+\mathrm{FP}},\quad \mathrm{TPR} = \frac{\mathrm{TP}}{\mathrm{TP}+\mathrm{FN}},\]
\[\mathrm{ACC}=\frac{\mathrm{TN} + \mathrm{TP}}{\mathrm{TN}+\mathrm{FP}+\mathrm{TP}+\mathrm{FN}},\]
\[\mathrm{MCC} = \frac{\mathrm{TP} \times \mathrm{TN} - \mathrm{FP} \times \mathrm{FN}}{\sqrt{(\mathrm{TP} + \mathrm{FP})(\mathrm{TP} + \mathrm{FN})(\mathrm{TN} + \text{FP})(\text{TN} + \text{FN})}},\]
where
$\mathrm{TP}$, $\mathrm{TN}$, $\mathrm{FP}$,  $\mathrm{FN}$ denote the number of true positives, true negatives, false positives, and
false negatives, respectively.
\subsection{Experiments with Synthetic Data}
 We will carry out the numerical experiments on the following synthetic models.

\noindent \textit{Model 1.} An $AR(2)$ model with $c_{i,i} = 1$, $c_{i,i-1} = c_{i-1,i} = 0.5$ and $c_{i,i-2} = c_{i-2,i} = 0.25$, for $i=1,\ldots,d$.
 
\noindent \textit{Model 2.} An $AR(3)$ model with $c_{i,i} = 1$, $c_{i,i-1} = c_{i-1,i} = 0.4$ and $c_{i,i-2} = c_{i-2,i} = c_{i,i-3} = c_{i-3,i} = 0.2,$ for $i=1,\ldots,d$.
 
\noindent \textit{Model 3.} An $AR(4)$ model with $c_{i,i} = 1$, $c_{i,i-1} = c_{i-1,i} = 0.4$, $c_{i,i-2} = c_{i-2,i} = c_{i,i-3} = c_{i-3,i} = 0.2$ and $c_{i,i-4} = c_{i-4,i} = 0.1,$ for $i=1,\ldots,d$.
 
 
\noindent \textit{Model 4.} Star model with every node connected to the first node, with $c_{i,i} = 1$, $c_{1,i} = c_{i,1} = 0.2$ and $c_{i,j} = 0$ otherwise, for $i,j=1,\ldots,d$.

\noindent \textit{Model 5.} Circle model with $c_{i,i} = 1$, $c_{i,i-1} = c_{i-1,i} = 0.5$ and $c_{1,d} = c_{d,1} = 0.4$,  for $i=1,\ldots,d$.

We inject heteroskedastic marginal variances by applying a diagonal rescaling into the models above to illustrate the superiority of the perturbed method under heterogeneous variance settings, as discussed in Section \ref{varianceadaptive}.
Define a diagonal scaling matrix $S=\mathrm{diag}(s_1,\ldots,s_d)$ with
$s_i=10$ for $i=1,\ldots,5$ and $s_i=1$ for $i=6,\ldots,d$.
We then set the data-generating covariance to
$\Sigma_S= S\Sigma S,$.
Because $D$ is diagonal, the sparsity pattern of the corresponding precision matrix is unchanged, while the first five variables have substantially larger scales.

For each model, we run simulations by the QUIC algorithm \citep{hsieh2014quic} with dimension $d=30$ and sample sizes $n=\{20, 30, 40\}$. 
We consider equally spaced grids for the perturbation magnitude $\delta$  in problem \eqref{adaptive} and the regularization parameter $\lambda$ in problem \eqref{standard} over an equally spaced grid on $[0.01, 1]$. 
The value of $\delta$ and $\lambda$ is chosen based on the Bayesian Information Criterion.
For each setting, we run two methods 100 times and record the average and standard error.

We report the results of the perturbed method and the $\ell_1$-regularized method in Tables \ref{ar2table}–\ref{circletable}. Since our primary interest lies in heteroscedastic settings, we additionally report the performance of the $\ell_1$-regularized method applied to standardized data. Across the majority of scenarios, the perturbed method attains higher ACC and MCC. This observation is consistent with our theoretical results on improved model selection consistency under heteroscedastic models.
\begin{table}[H]
\centering 
\begin{tabular}{c|c|cccc}
\hline
  $n$& Method & ACC & MCC & TNR & TPR \\ \hline
\multirow{2}{*}{$20$} & Perturbed &\textbf{0.876} (0.000) &\textbf{0.232} (0.004)  &0.994 (0.000)& 0.095 (0.002)\\  
 & $\ell_1$ & 0.233 (0.001) &0.036 (0.004)&  0.132 (0.001) & \textbf{0.903} (0.004)\\  & $\ell_1$ std & 0.869 (0.000)& 0.016 (0.005)  &\textbf{0.999} (0.000) &0.006 (0.002)\\\hline
\multirow{2}{*}{$30$} & Perturbed & \textbf{0.877} (0.000) &\textbf{0.243} (0.003) & 0.994 (0.000)& 0.102 (0.002)\\ 
 & $\ell_1$ &0.698 (0.002)& -0.002 (0.003)& 0.769 (0.002)& \textbf{0.229} (0.005) \\
 & $\ell_1$ std &0.871 (0.001) &0.051 (0.011) & \textbf{0.998} (0.001) &0.026 (0.007)\\
 \hline
\multirow{2}{*}{$40$} & Perturbed &\textbf{0.879} (0.000)& \textbf{0.263} (0.004)&  \textbf{0.994} (0.001) &0.115 (0.003)\\  
 & $\ell_1$ & 0.709 (0.001)& 0.010 (0.004) & 0.781 (0.002) &\textbf{0.232} (0.005)
 \\  
 & $\ell_1$ std &0.875 (0.001) &0.148 (0.018) & 0.993 (0.002) &0.096 (0.014) \\ \hline
\end{tabular}
\caption{Average Value and Standard Error of the Metrics for $AR(2)$}
\label{ar2table}
\end{table}

\begin{table}[H]
\centering
\begin{tabular}{c|c|cccc}
\hline
$n$ & Method & ACC & MCC & TNR & TPR \\ \hline
\multirow{3}{*}{$20$} 
& Perturbed      & \textbf{0.818} (0.000) & \textbf{0.218} (0.003) & 0.998 (0.000) & 0.069 (0.002) \\
& $\ell_1$       & 0.318 (0.012)          & 0.000 (0.005)          & 0.203 (0.020) & \textbf{0.796} (0.022) \\
& $\ell_1$ std   & 0.807 (0.000)          & 0.005 (0.002)          & \textbf{1.000} (0.000) & 0.001 (0.001) \\
\hline
\multirow{3}{*}{$30$} 
& Perturbed      & \textbf{0.820} (0.000) & \textbf{0.228} (0.003) & 0.998 (0.000) & 0.074 (0.002) \\
& $\ell_1$       & 0.662 (0.001)          & -0.047 (0.002)         & 0.780 (0.002) & \textbf{0.171} (0.002) \\
& $\ell_1$ std   & 0.808 (0.000)          & 0.027 (0.007)          & \textbf{0.999} (0.000) & 0.009 (0.003) \\
\hline
\multirow{3}{*}{$40$} 
& Perturbed      & \textbf{0.819} (0.000) & \textbf{0.223} (0.003) & \textbf{0.998} (0.000) & 0.072 (0.002) \\
& $\ell_1$       & 0.672 (0.001)          & -0.035 (0.003)         & 0.791 (0.002) & \textbf{0.173} (0.003) \\
& $\ell_1$ std   & 0.813 (0.001)          & 0.085 (0.014)          & 0.996 (0.001) & 0.045 (0.009) \\
\hline
\end{tabular}
\caption{Average Value and Standard Error of the Metrics for $AR(3)$}
\label{ar3table}
\end{table}

\begin{table}[H]
\centering
\begin{tabular}{c|c|cccc}
\hline
$n$ & Method & ACC & MCC & TNR & TPR \\ \hline
\multirow{3}{*}{$20$}
& Perturbed    & \textbf{0.761} (0.000) & \textbf{0.203} (0.003) & 0.999 (0.000) & 0.058 (0.001) \\
& $\ell_1$     & 0.407 (0.014)          & -0.029 (0.005)         & 0.324 (0.029) & \textbf{0.650} (0.032) \\
& $\ell_1$ std & 0.747 (0.000)          & 0.005 (0.002)          & \textbf{1.000} (0.000) & 0.001 (0.000) \\
\hline
\multirow{3}{*}{$30$}
& Perturbed    & \textbf{0.761} (0.000) & \textbf{0.199} (0.002) & \textbf{1.000} (0.000) & 0.054 (0.001) \\
& $\ell_1$     & 0.618 (0.001)          & -0.078 (0.002)         & 0.776 (0.002) & \textbf{0.151} (0.001) \\
& $\ell_1$ std & 0.748 (0.000)          & 0.012 (0.004)          & \textbf{1.000} (0.000) & 0.002 (0.001) \\
\hline
\multirow{3}{*}{$40$}
& Perturbed    & \textbf{0.760} (0.000) & \textbf{0.195} (0.002) & \textbf{1.000} (0.000) & 0.053 (0.001) \\
& $\ell_1$     & 0.631 (0.001)          & -0.059 (0.002)         & 0.793 (0.002) & \textbf{0.154} (0.001) \\
& $\ell_1$ std & 0.749 (0.001)          & 0.031 (0.007)          & 0.999 (0.001) & 0.011 (0.004) \\
\hline
\end{tabular}
\caption{Average Value and Standard Error of the Metrics for $AR(4)$}
\label{ar4table}
\end{table}
\begin{table}[H]
\centering
\begin{tabular}{c|c|cccc}
\hline
$n$ & Method & ACC & MCC & TNR & TPR \\ \hline
\multirow{3}{*}{$20$}
& Perturbed    & \textbf{0.866} (0.008) & \textbf{0.534} (0.016) & \textbf{0.861} (0.008) & 0.929 (0.009) \\
& $\ell_1$     & 0.837 (0.009)          & 0.496 (0.006)          & 0.826 (0.010)          & \textbf{0.989} (0.002) \\
& $\ell_1$ std & 0.641 (0.009)          & 0.157 (0.006)          & 0.639 (0.010)          & 0.666 (0.012) \\
\hline
\multirow{3}{*}{$30$}
& Perturbed    & \textbf{0.886} (0.006) & \textbf{0.581} (0.011) & \textbf{0.879} (0.006) & 0.981 (0.005) \\
& $\ell_1$     & 0.861 (0.001)          & 0.524 (0.002)          & 0.851 (0.001)          & \textbf{0.996} (0.001) \\
& $\ell_1$ std & 0.644 (0.003)          & 0.208 (0.004)          & 0.635 (0.003)          & 0.772 (0.006) \\
\hline
\multirow{3}{*}{$40$}
& Perturbed    & \textbf{0.902} (0.004) & \textbf{0.611} (0.007) & \textbf{0.896} (0.004) & 0.997 (0.001) \\
& $\ell_1$     & 0.873 (0.001)          & 0.546 (0.002)          & 0.864 (0.001)          & \textbf{0.999} (0.000) \\
& $\ell_1$ std & 0.628 (0.002)          & 0.236 (0.003)          & 0.612 (0.002)          & 0.854 (0.006) \\
\hline
\end{tabular}
\caption{Average Value and Standard Error of the Metrics for the Star model}
\label{startable}
\end{table}

\begin{table}[H]
\centering
\begin{tabular}{c|c|cccc}
\hline
$n$ & Method & ACC & MCC & TNR & TPR \\ \hline
\multirow{3}{*}{$20$}
& Perturbed    & 0.622 (0.014)          & 0.310 (0.008)          & 0.594 (0.015)          & 0.993 (0.002) \\
& $\ell_1$     & 0.398 (0.015)          & 0.181 (0.007)          & 0.355 (0.016)          & 0.980 (0.002) \\
& $\ell_1$ std & \textbf{0.633} (0.012) & \textbf{0.320} (0.009) & \textbf{0.606} (0.013) & \textbf{0.995} (0.001) \\
\hline
\multirow{3}{*}{$30$}
& Perturbed    & 0.706 (0.004)          & 0.363 (0.004)          & 0.684 (0.005)          & \textbf{1.000} (0.000) \\
& $\ell_1$     & 0.587 (0.005)          & 0.283 (0.003)          & 0.557 (0.005)          & 0.998 (0.001) \\
& $\ell_1$ std & \textbf{0.713} (0.011) & \textbf{0.381} (0.008) & \textbf{0.692} (0.012) & 0.999 (0.000) \\
\hline
\multirow{3}{*}{$40$}
& Perturbed    & \textbf{0.707} (0.003) & \textbf{0.363} (0.003) & \textbf{0.685} (0.004) & \textbf{1.000} (0.000) \\
& $\ell_1$     & 0.606 (0.004)          & 0.294 (0.002)          & 0.577 (0.004)          & \textbf{1.000} (0.000) \\
& $\ell_1$ std & 0.666 (0.011)          & 0.346 (0.009)          & 0.642 (0.012)          & \textbf{1.000} (0.000) \\
\hline
\end{tabular}
\caption{Average Value and Standard Error of the Metrics for the Circle model}
\label{circletable}
\end{table}

\subsection{Experiments with Real Data}
In this subsection, we run the two estimation methods on the leukemia dataset \citep{golub1999molecular} to compare the associated real-world practical performance. 

The dataset comprises 7129 gene expression profiles collected from 72 patients, of whom 47 were diagnosed with Acute Lymphoblastic Leukemia (ALL) and 25 with Acute Myeloid Leukemia (AML).
We first select the top $d$ genes with the highest marginal variance. The dimension $d$ is varied across the set $\{20, 30, 40, 50, 60\}$. The data is then centered and standardized. Following the Linear Discriminant Analysis (LDA) framework,  the classification rule relies on the LDA score. For the observation $\bm{x}$, the LDA score is defined by 
\[
\delta_k(\bm{x}) = \bm{x}^T \hat{ C}\hat{\boldsymbol{\mu}}_k - \frac{1}{2} \hat{\boldsymbol{\mu}}_k^T \hat{ C}\hat{\boldsymbol{\mu}}_k + \log \hat{\pi}_k,\quad k=1,2, \]
where $\hat{\pi}_k$ represents the proportion of group $k$ in the training set and $\hat{\boldsymbol{\mu}}_k$ is the empirical mean of the observations in the training set. The classification rule is $\hat{k}(\bm{x})  =\arg\max_{k \in \{1, 2\}} \delta_k(\bm{x})$.  The classification performance is  dependent on the estimation accuracy of the precision matrix.

The experiment employs a nested cross-validation scheme. In the outer loop, the data is split into 10 stratified folds. For each training run, we additionally carry out an inner stratified 5‑fold cross‑validation to tune the parameters, $\delta$ in the perturbed method \eqref{adaptive} and $\lambda$ in the standard $\ell_1$-regularized method \eqref{standard}, through a grid search within $[0.01,1]$, choosing the value that has the best MCC. With the optimal parameter, we train the model on the full training set and then evaluate the model performance on the held‑out test fold. We replicate this whole scheme $100$ times.

To evaluate performance, we report ACC, MCC, TNR, and TPR. We treat AML as the positive class and ALL as the negative class, thus TPR corresponds to sensitivity for AML and TNR corresponds to specificity for ALL. The average metrics and their standard errors are computed across the 100 replications, where the standard error is calculated from the variability of the replicate-level means.

Notably, across the considered values of $d$, the perturbed method generally achieves higher ACC, MCC, and TPR. Overall, these results suggest that the proposed perturbed estimator can provide improved precision matrix estimates for this dataset, translating into better LDA classification performance compared with the $\ell_1$-regularized approach.

\begin{table}[H]
\centering
\begin{tabular}{c|c|cccc}
\hline
  $d$& Method & ACC & MCC & TNR & TPR \\ \hline

\multirow{2}{*}{$20$} & Perturbed
& \textbf{0.866} (0.002) & \textbf{0.682} (0.005) & \textbf{0.997} (0.001) & \textbf{0.618} (0.005)\\
& $\ell_1$
& 0.863 (0.002) & 0.670 (0.006) & \textbf{0.997} (0.001) & 0.607 (0.006)\\ \hline

\multirow{2}{*}{$30$} & Perturbed
& \textbf{0.921} (0.002) & \textbf{0.822} (0.005) & \textbf{1.000} (0.000) & \textbf{0.771} (0.006)\\
& $\ell_1$
& 0.918 (0.002) & 0.817 (0.004) & \textbf{1.000} (0.000) & 0.764 (0.005)\\ \hline

\multirow{2}{*}{$40$} & Perturbed
& \textbf{0.938} (0.002) & \textbf{0.863} (0.004) & \textbf{0.998} (0.001) & \textbf{0.823} (0.004)\\
& $\ell_1$
& 0.937 (0.002) & 0.860 (0.004) & \textbf{0.998} (0.001) & 0.820 (0.005)\\ \hline

\multirow{2}{*}{$50$} & Perturbed
& \textbf{0.944} (0.001) & \textbf{0.878} (0.003) & 0.997 (0.001) & \textbf{0.844} (0.004)\\
& $\ell_1$
& 0.939 (0.001) & 0.866 (0.004) & \textbf{0.998} (0.001) & 0.826 (0.004)\\ \hline

\multirow{2}{*}{$60$} & Perturbed
& \textbf{0.938} (0.001) & \textbf{0.864} (0.004) & 0.990 (0.001) & \textbf{0.838} (0.004)\\
& $\ell_1$
& 0.935 (0.001) & 0.859 (0.003) & \textbf{0.993} (0.001) & 0.825 (0.004)\\ \hline

\end{tabular}
\caption{Average Value and Standard Error of the Metrics for leukemia dataset}
\label{performance}
\end{table}
\section{Discussion}\label{dis}

In addition to robustness and model selection consistency, future directions may include extending our framework to handle more complex data structures, including the unknown group structures studied in \citep{cheng2025large} and the compositional data investigated in \citep{zhang2025care}.

While the adversarial perturbations investigated in our framework are sample-wise, as defined in \eqref{intro}, an intriguing direction for future research would be to extend this to feature-wise perturbations.
In the linear regression, \citet{xu2008robust} has shown that
robust linear regression framework under the feature-wise perturbation is equivalent to LASSO.
Inspired by this fact, the computational and statistical explorations of feature-wise perturbed precision matrix estimation is of much interest.


\newpage

\appendix
\section{Proof of Theorem \ref{tractable}}
The proof for Theorem  \ref{tractable} is based on the following lemmas.

\begin{lemma}\label{lemma1}
The dual of the optimization problem 
    \[\max_{\Vert \bm{\Delta}\Vert_2\leq \delta}(\bm{x}+\bm{\Delta})^\top  C (\bm{x}+\bm{\Delta})\]
     is the following optimization problem
     \[\bm{x}^\top  C\bm{x}+\inf_{\lambda I - C \succ 0}\left\{\lambda\delta^2+\bm{x}^\top  C (\lambda I - C )^{-1}  C\bm {x}\right\}.\]
\end{lemma}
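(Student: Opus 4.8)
The plan is to dualize the inner maximization problem
\[
\max_{\Vert \bm{\Delta}\Vert_2 \leq \delta} (\bm{x}+\bm{\Delta})^\top C (\bm{x}+\bm{\Delta})
\]
by recognizing it as a trust-region subproblem. First I would substitute $\bm{y} = \bm{x} + \bm{\Delta}$, but it is cleaner to keep $\bm{\Delta}$ and expand the quadratic: the objective is $\bm{x}^\top C \bm{x} + 2\bm{x}^\top C \bm{\Delta} + \bm{\Delta}^\top C \bm{\Delta}$, a (generally nonconcave, since $C \succ 0$) quadratic in $\bm{\Delta}$ over the Euclidean ball $\Vert \bm{\Delta}\Vert_2 \leq \delta$. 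This is exactly the classical trust-region subproblem, which enjoys strong duality despite nonconvexity (the S-lemma / Moré–Sorensen). The Lagrangian with multiplier $\lambda \geq 0$ for the constraint $\delta^2 - \Vert\bm{\Delta}\Vert_2^2 \geq 0$ is
\[
L(\bm{\Delta},\lambda) = \bm{x}^\top C \bm{x} + 2\bm{x}^\top C \bm{\Delta} + \bm{\Delta}^\top C \bm{\Delta} + \lambda(\delta^2 - \Vert\bm{\Delta}\Vert_2^2).
\]

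Next I would compute the dual function $g(\lambda) = \max_{\bm{\Delta}} L(\bm{\Delta},\lambda)$. The unconstrained maximization over $\bm{\Delta}$ of the quadratic $\bm{\Delta}^\top(C - \lambda I)\bm{\Delta} + 2\bm{x}^\top C \bm{\Delta}$ is finite only when $C - \lambda I \preceq 0$, i.e. $\lambda I - C \succeq 0$; to keep things strictly inside the feasible region and invertible, work with $\lambda I - C \succ 0$. In that case setting the gradient to zero gives $\bm{\Delta}^\star = (\lambda I - C)^{-1} C \bm{x}$, and substituting back yields
\[
g(\lambda) = \bm{x}^\top C \bm{x} + \lambda \delta^2 + \bm{x}^\top C (\lambda I - C)^{-1} C \bm{x}.
\]
Therefore the dual problem is $\inf_{\lambda I - C \succ 0} g(\lambda)$, which is precisely the claimed expression $\bm{x}^\top C \bm{x} + \inf_{\lambda I - C \succ 0}\{\lambda\delta^2 + \bm{x}^\top C(\lambda I - C)^{-1} C \bm{x}\}$.

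The main obstacle is justifying that there is no duality gap, since the primal inner problem is a nonconvex maximization (maximizing a positive-definite quadratic over a ball). I would invoke the well-known strong duality of the trust-region subproblem: for any quadratic objective over a single Euclidean-ball constraint, the Lagrangian dual value equals the primal optimal value (this follows from the S-procedure being lossless for one quadratic constraint, or from an explicit eigenvalue argument via the secular equation). I would cite a standard reference (e.g. Boyd and Vandenberghe, Appendix B, or a trust-region text) to close this point cleanly. One should also note the edge cases: if the optimal $\lambda$ wants to sit on the boundary $\lambda_{\max}(C)$ where $\lambda I - C$ becomes singular, the infimum in the dual is still well-defined as a limit and matches the primal; since Theorem~\ref{tractable} ultimately works with the strict constraint $\lambda I - C \succ 0$ inside the outer infimum over $C$, this does not cause difficulty, and I would remark only briefly on it. Finally I would confirm weak duality (dual $\geq$ primal) is immediate from the Lagrangian bound, so together with the trust-region strong-duality fact the equality in the lemma follows.
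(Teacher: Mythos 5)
Your proposal is correct and follows essentially the same route as the paper's proof: form the Lagrangian for the ball constraint, observe that the inner maximization over $\bm{\Delta}$ is finite only when $\lambda I - C \succ 0$, solve for $\bm{\Delta}^\star = (\lambda I - C)^{-1}C\bm{x}$, and substitute back to obtain the dual. Your additional care about the absence of a duality gap (via the S-lemma / trust-region strong duality, since the inner problem maximizes a convex quadratic over a Euclidean ball) addresses a point the paper's proof leaves implicit, even though the subsequent use of the lemma in Theorem~\ref{tractable} relies on exactly that primal--dual equality.
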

\begin{proof}
    The Lagrangian dual of $\max_{\Vert \bm{\Delta}\Vert_2\leq \delta}(\bm{x}+\bm{\Delta})^\top  C (\bm{x}+\bm{\Delta})$ is as follows:
    \[\inf_{\lambda\geq 0}\left\{\lambda\delta^2+\max_{\bm{\Delta} }\left\{(\bm{x}+\bm{\Delta})^\top  C (\bm{x}+\bm{\Delta})-\lambda\Vert \bm{\Delta}\Vert_2^2\right\}\right\},\]
    which is equivalent to
        \[\bm{x}^\top  C\bm{x}+\inf_{\lambda\geq 0}\left\{\lambda\delta^2+\max_{\bm{\Delta}}\left\{\bm{\Delta}^\top (C-\lambda I)\bm{\Delta}+2 \bm{x}^\top  C \bm{\Delta}\right\}\right\}.\]
        
    To make the inner optimization problem finite, we should require $\lambda I - C \succ 0$. In this way, the optimal value will be obtained when 
    \[\bm{\Delta}= (\lambda I - C )^{-1} C\bm {x},\]
    the resulting optimal value is 
    \[\max_{\bm{\Delta}\in\mathbb{R}^p}\left\{\bm{\Delta}^\top (C-\lambda I)\bm{\Delta}+2 \bm{x}^\top  C \bm{\Delta}\right\}= \bm{x}^\top  C (\lambda I - C )^{-1}  C\bm {x}\]
\end{proof}
Now we begin to prove Theorem \ref{tractable}.
\begin{proof}
It follows from Lemma \ref{lemma1} that problem \eqref{mainproblem1} is equivalent to the following:
\[\inf_{ C\succ 0,\lambda I - C \succ 0}\left\{ -\log \mathrm{det}  C+\frac{1}{n}\sum_{i=1}^n\bm{x}_i^\top  C\bm{x}_i+\lambda\delta^2+\frac{1}{n}\sum_{i=1}^n \bm{x}_i^\top  C (\lambda I - C )^{-1}  C\bm {x}_i\right\}.\]

By introducing new variables $t_i$, we have that 
\[\inf_{ C\succ 0, \lambda I - C \succ 0}\left\{ -\log \mathrm{det}  C+\frac{1}{n}\sum_{i=1}^n\bm{x}_i^\top  C\bm{x}_i+\lambda\delta^2+\frac{1}{n}\sum_{i=1}^n t_i\right\}.\]
\[ \mathrm{s.t.}\quad  \bm{x}_i^\top  C (\lambda I - C )^{-1}  C\bm {x}_i\leq t_i\quad \text{for } 1\leq i\leq n.\]

Due to the Schur complement, we have the following reformulation: 
\[
\begin{aligned}
\inf_{\lambda, \{t_i\}, C} \quad &  -\log \mathrm{det}  C+\frac{1}{n}\sum_{i=1}^n\bm{x}_i^\top  C\bm{x}_i+\lambda\delta^2+\frac{1}{n}\sum_{i=1}^n t_i \\
\text{subject to} \quad & C\succ 0,\lambda I - C \succ 0\\
& \begin{pmatrix}
t_i &  \bm{x}_i^\top C \\
C\bm {x}_i & \lambda I - C 
\end{pmatrix} \succeq 0, \quad \forall i = 1, \dots, n.\\
\end{aligned}\]
\end{proof}
\section{Proof of Theorem \ref{equivalence}}
\begin{proof}
It remains to show the following identity: 
\[\mathrm{tr}(C(\lambda I-  C)^{-1} C \bar{A} ) =\lambda^2 \mathrm{tr}((\lambda I-  C)^{-1} \bar{A})-\mathrm{tr}((\lambda I+ C)\bar{A}).\]

Since a matrix can commute with its own resolvent matrix,  we have that 
\[\mathrm{tr}(C(\lambda I-  C)^{-1} C \bar{A} ) = \mathrm{tr}(C^2(\lambda I-  C)^{-1}\bar{A} ).\]

Then, we can conclude that 
\[\mathrm{tr}(C(\lambda I-  C)^{-1} C \bar{A} ) -\lambda^2 \mathrm{tr}((\lambda I-  C)^{-1} \bar{A})= \mathrm{tr}((C^2-\lambda^2 I)(\lambda I-  C)^{-1}\bar{A} ).\]

Since 
\[(C^2-\lambda^2 I) = (C-\lambda I)(C+\lambda I),\] the identity has been proven.
\end{proof}

\section{Proof of Proposition \ref{adversarialrobustness}}
\begin{proof}
When $p\geq 1$, we have that 
\begin{equation*}\begin{aligned}
\max_{\Vert \bm{\Delta}\Vert_p\leq \delta}(\bm{x}+\bm{\Delta})^\top  C (\bm{x}+\bm{\Delta})&\leq 
\max_{\Vert \bm{\Delta}\Vert_\infty\leq \delta}(\bm{x}+\bm{\Delta})^\top  C (\bm{x}+\bm{\Delta})\\& = \bm{x} ^\top  C  \bm{x} +\max_{\Vert \bm{\Delta}\Vert_\infty\leq \delta} \left\{ 2\bm{x}^\top  C \bm{\Delta} +  \bm{\Delta} ^\top  C  \bm{\Delta}\right\}\\&\leq \bm{x} ^\top  C  \bm{x} +\max_{\Vert \bm{\Delta}\Vert_\infty\leq \delta}   2\bm{x}^\top  C \bm{\Delta} +\max_{\Vert \bm{\Delta}\Vert_\infty\leq \delta}  \bm{\Delta} ^\top  C  \bm{\Delta}\\&= \bm{x} ^\top  C  \bm{x} +2\delta \Vert  C\bm {x}\Vert_1+ \max_{\Vert \bm{\Delta}\Vert_\infty\leq \delta}  \bm{\Delta} ^\top  C  \bm{\Delta}\\&\leq  \bm{x} ^\top  C  \bm{x} +2\delta \Vert  C\bm {x}\Vert_1+\delta^2 \Vert  C\Vert_{1,1}
\end{aligned}\end{equation*}

Notice that 
\begin{equation*}\begin{aligned}\mathbb{E}_{\mathbb{P}_n}\left[ \Vert  C\bm {x}\Vert_1\right]=\frac{1}{n}\sum_{i=1}^n \sum_{j=1}^d \vert [ C]_j\bm{x}_i\vert &\leq \frac{1}{n}\sum_{i=1}^n \sum_{j=1}^d \sum_{k=1}^d \vert [ C]_{kj}\vert \vert\bm{x}^i_k\vert \\
&=  \sum_{j=1}^d \sum_{k=1}^d \vert [ C]_{kj}\vert \left(\frac{1}{n}\sum_{i=1}^n \vert\bm{x}^i_k\vert\right).\end{aligned}\end{equation*}

We can conclude that the inequality holds.
\end{proof}
\section{Proof for Theorem \ref{thm:main}}
Our proof for Theorem \ref{thm:main} is based on the primal-dual witness (PDW) construction proposed by \cite{wainwright2009sharp,ravikumar2011high}.

Define the symmetric matrix $\hat\Lambda$ by $\hat\Lambda_{jj}=0$ and
$\hat\Lambda_{kj}=\hat\lambda_{kj}$ for $k\neq j, k,j=1,\ldots,d$.

The matrix $\widetilde{C}\succ 0$ is optimal for \eqref{highdimformulation} if and only if there exists a symmetric $\widetilde Z$ such that
\begin{equation}
\label{eq:KKT_thm}
\bar{A} - \widetilde{C}^{-1} + \hat\Lambda\odot \widetilde Z=0,
\end{equation}
where $\widetilde  Z\in \partial \Vert \widetilde{C}\Vert_{1,1}$, $\bar{A}=\mathbb{E}_{\mathbb{P}_n}[\bm{x}\bm{x}^\top]$.

Define the restricted problem
\begin{equation}
\label{eq:restricted_thm}
\dot{C}
\in\arg\min_{C\succ 0,\ C_{S^c}=0}
\Big\{-\log\det C + \langle\bar{A},C\rangle + \sum_{k\neq j}\hat\lambda_{kj}|C_{kj}|\Big\},
\end{equation}
and set $\dot Z_S=\mathrm{sign}(\dot C_S)$. 

For $(j,k)\in S^{c}$, we  define
\begin{equation}
\label{eq:Zoff_def_thm}
\dot Z_{jk}
:=
\frac{-\bar{A}_{jk}+{[{\dot C}^{-1}}]_{jk}}{\hat\lambda_{jk}} .
\end{equation}

Let ${\it \Delta}:= \dot{C}-\Sigma^{-1}$, $W: = \bar{A}-\Sigma$, and
\[
R({\it\Delta}) :=\dot{C}^{-1} - \Sigma+\Sigma{\it \Delta}\Sigma.
\]

We present several lemmas before further proceeding to prove Theorem \ref{thm:main}.
\begin{lemma}[Remainder Bound, lemma 5 in \cite{ravikumar2011high}]\label{lemma:remainderbound}
    If \[\Vert{\it \Delta}\Vert_{\max} \leq \frac{1}{3\kappa_{\Sigma}s},\] then we have that \[\Vert R({\it \Delta})\Vert_{\max}\leq \frac{3}{2} s\kappa_{\Sigma}^3\Vert{\it \Delta}\Vert_{\max}^2. \]
\end{lemma}
\begin{lemma}[Support error control]
\label{lem:Delta_control_brouwer}
 Define
\[
\hat\lambda_{\max,S}:=\max_{(i,j)\in S}\hat\lambda_{ij},
\qquad
r:=2\kappa_\Gamma\Big(\|W\|_{\max}+\hat\lambda_{\max,S}\Big).
\]

Assume that
\begin{equation}
\label{eq:r_small_brouwer}
r \le \min\Big\{\frac{1}{3\kappa_\Sigma s},\ \frac{1}{3\kappa_\Sigma^3\kappa_\Gamma s}\Big\},
\end{equation}
then the oracle error satisfies $\|\Delta\|_{\max}\le r$.
\end{lemma}

\begin{proof}
The restricted problem \eqref{eq:restricted_thm} is a convex program over a convex set.
Moreover, the smooth part $C\mapsto -\log\det C + \langle S,C\rangle$ is strictly convex on $C\succ0$.
Therefore, the problem has a unique solution.

We have ${\it\Delta}_{S^c}=0$ by PDW construction.
Restricting the KKT condition of the restricted problem to $S$ and using
$\bar{A}=\Sigma+W$ together with the remainder definition
yields the following
\begin{equation}
\label{eq:G_restricted}
\Big[ W+\Sigma{\it \Delta}\Sigma - R({\it\Delta}) + (\hat\Lambda\odot \dot Z) \Big]_{S}=0.
\end{equation}

We vectorize the left-hand side in \eqref{eq:G_restricted} and use
$\mathrm{vec}(\Sigma{\it \Delta}\Sigma)=\Gamma\mathrm{vec}({\it \Delta})$, and obtain the following 
\[
\Gamma_{SS}\mathrm{vec}({\it \Delta}_{S})
-
\mathrm{vec}\Big(R({\it \Delta})_{S}-W_{S}-(\hat\Lambda\odot \dot Z)_{S}\Big)
=0.
\]

Multiplying the above by $(\Gamma_{SS})^{-1}$ gives
\[
T({\it\Delta})
:=
\mathrm{vec}({\it \Delta}_{S})-
(\Gamma_{SS})^{-1}
\mathrm{vec}\Big(R({\it\Delta})_{S}-W_{S}-(\hat\Lambda\odot \dot Z)_{S}\Big),
\]
so that \eqref{eq:G_restricted} holds if and only if $T({\it\Delta})=0$.

For any $u\in \mathbb{R}^{|S|}$,
let ${\it\Delta}(u)$ denote the symmetric matrix with ${\it\Delta}(u)_{S^c}=0$ and
$\mathrm{vec}({\it\Delta}(u)_S)=u$. Define $F:\mathbb{R}^{|S|}\to\mathbb{R}^{|S|}$ by
\[
F(u)
:=u-T({\it\Delta}(u))
=(\Gamma_{SS})^{-1} 
\mathrm{vec}\Big(R({\it\Delta}(u))_{S}-W_{S}-(\hat\Lambda\odot \dot Z)_{S}\Big).
\]
Then $F(u)=u$ if and only if $T({\it\Delta}(u))=0$, equivalently if and only if
\eqref{eq:G_restricted} holds for ${\it\Delta}(u)$.

Let
\[
\mathcal B(r):=\Big\{u\in \mathbb{R}^{|S|}:\ \|u\|_\infty\le r\Big\}.
\]

Take any $u\in \mathcal B(r)$ and , with a slight abuse of notation, write ${\it\Delta}={\it\Delta}(u)$.

Because $\|\dot Z_{S}\|_{\max}\le 1$ and  the penalty is bounded by
$\hat\lambda_{\max,S}$ on $S$, we have
\[
\|(\hat\Lambda\odot \dot Z)_{S}\|_{\max}\le \hat\lambda_{\max,S}.
\]

By definition of $\kappa_\Gamma=\|(\Gamma_{SS})^{-1}\|_\infty$, we obtain
\[
\|F(u)\|_{\infty}
\le
\kappa_\Gamma\Big(\|R({\it \Delta})\|_{\max}+\|W\|_{\max}+\hat\lambda_{\max,S}\Big).
\]

 Since $\|{\it\Delta}\|_{\max}\le r$, $r\le (3\kappa_\Sigma s)^{-1}$, and ${\it\Delta}_{S^c}=0$, Lemma \ref{lemma:remainderbound} indicates the remainder bound
\[
\|R({\it \Delta})\|_{\max}\le \frac{3}{2} s \kappa_\Sigma^3 \|{\it \Delta}\|_{\max}^2
\le \frac{3}{2} s \kappa_\Sigma^3 r^2.
\]

Therefore,
\[
\|F(u)\|_{\infty}
\le
\kappa_\Gamma\Big(\tfrac{3}{2}s\kappa_\Sigma^3 r^2 + \|W\|_{\max}+\hat\lambda_{\max,S}\Big).
\]

Using $r:=2\kappa_\Gamma(\|W\|_{\max}+\hat\lambda_{\max,S})$, we obtain
\[
\|F(u)\|_{\infty}
\le
\frac{3}{2}\kappa_\Gamma s\kappa_\Sigma^3 r^2 + \frac{r}{2}.
\]

Finally, the condition $r\le (3\kappa_\Sigma^3\kappa_\Gamma s)^{-1}$ implies
\[
\frac{3}{2}\kappa_\Gamma s\kappa_\Sigma^3 r^2 \le \frac{r}{2},
\]
hence $\|F(u)\|_{\infty}\le r$. Therefore, $F$ maps $\mathcal B(r)$ into itself.
Since $F$ is continuous on $\mathcal B(r)$
and $\mathcal B(r)$ is convex and compact, the Brouwer's fixed point theorem
\citep{ortega2000iterative} implies there exists $u^\sharp\in \mathcal B(r)$ such that
$F(u^\sharp)=u^\sharp$.

Let ${\it\Delta}^\sharp={\it\Delta}(u^\sharp)$, i.e. ${\it\Delta}^\sharp_{S^c}=0$ and
$\mathrm{vec}({\it\Delta}^\sharp_S)=u^\sharp$. Then $F(u^\sharp)=u^\sharp$ implies
$T({\it\Delta}^\sharp)=0$, so \eqref{eq:G_restricted} holds for ${\it\Delta}^\sharp$.
By the uniqueness of the restricted optimum, this fixed point must equal the solution
${\it \Delta}=\dot C-\Sigma^{-1}$. Therefore,
\[
\|{\it \Delta}\|_{\max}=\|u^\sharp\|_\infty \le r.
\]

\end{proof}

\begin{lemma}[Strict dual feasibility]
\label{lem:strict_dual_feasibility_weak}
Suppose Assumption~\ref{boundedincoherence} holds.  
Let $\dot Z$ be the PDW dual variable constructed in~\eqref{eq:restricted_thm}--\eqref{eq:Zoff_def_thm}.  
Define
\[
\hat\lambda_{\min,S^c}:=\min_{e\in S^c}\hat\lambda_e ,\quad \hat\psi
:=
\max_{e\in S^c}\sum_{s\in S}|A_{es}|
\frac{\hat\lambda_s}{\hat\lambda_e}.
\]

If $\hat\psi\le 1-5\alpha/8$ and
\begin{equation}
\label{eq:lemma3_sufficient}
\frac{1+\kappa_A}{\hat\lambda_{\min,S^c}}
\Big(\|W\|_{\max}+\|R({\it \Delta})\|_{\max}\Big)
\le \frac{\alpha}{2},
\end{equation}
then $\|\dot Z_{S^c}\|_{\max}<1$ and $\dot C=\widetilde{C}$.
\end{lemma}

\begin{proof}
The vectorized KKT condition for the restricted problem \eqref{eq:restricted_thm} is
\begin{equation}
\label{eq:kkt_vec_master}
\Gamma {\it \Delta} + W - R({\it \Delta}) + \hat\Lambda \dot Z = 0 .
\end{equation}

We partition \eqref{eq:kkt_vec_master} into $(S,S^c)$ blocks. Since our PDW construction enforces
${\it \Delta}_{S^c}=0$, we obtain
\begin{align}
\Gamma_{SS}{\it \Delta}_S + W_S - R({\it \Delta})_S + \hat\Lambda_S \dot Z_S &= 0,
\label{eq:kkt_S}\\
\Gamma_{S^cS}{\it \Delta}_S + W_{S^c} - R({\it \Delta})_{S^c} + \hat\Lambda_{S^c}\dot Z_{S^c} &= 0 .
\label{eq:kkt_Sc}
\end{align}

Solving \eqref{eq:kkt_S} gives
\begin{equation}
\label{eq:Delta_S_solution}
{\it \Delta}_S
=
-(\Gamma_{SS})^{-1}
\Big(W_S - R({\it \Delta})_S + \hat\Lambda_S \dot Z_S \Big).
\end{equation}

Recall
\[
A := \Gamma_{S^cS}(\Gamma_{SS})^{-1}.
\]

Substituting \eqref{eq:Delta_S_solution} into \eqref{eq:kkt_Sc} yields
\begin{equation}
\label{eq:ZSc_exact}
\hat\Lambda_{S^c}\dot Z_{S^c}
=
A\Big(W_S - R({\it \Delta})_S + \hat\Lambda_S \dot Z_S\Big)
-\Big(W_{S^c}-R({\it \Delta})_{S^c}\Big).
\end{equation}

Fix any $e\in S^c$. Taking absolute values of the $e$th component of \eqref{eq:ZSc_exact} gives
\begin{equation}
\label{eq:ZSc_component_bound}
|\dot Z_e|
\le
\frac{1}{\hat\lambda_e}
\Big(
|[A(W_S-R({\it \Delta})_S)]_e|
+|W_e-R({\it \Delta})_e|
+|[A\hat\Lambda_S\dot Z_S]_e|
\Big).
\end{equation}

Write row $e$ of $A$ as $(A_{es})_{s\in S}$. Then
\[
|[A(W_S-R({\it \Delta})_S)]_e|
\le
\sum_{s\in S}|A_{es}|(|W_s|+|R({\it \Delta})_s|)
\le
\|A_{e,\cdot}\|_1
(\|W\|_{\max}+\|R({\it \Delta})\|_{\max}).
\]

Also,
\[
|W_e-R({\it \Delta})_e|
\le \|W\|_{\max}+\|R({\it \Delta})\|_{\max}.
\]

Recall
\[
\kappa_A = \max_{e\in S^c}\|A_{e,\cdot}\|_1 .
\]

Hence, uniformly over $e\in S^c$,
\begin{equation}
\label{eq:noise_uniform}
\frac{1}{\hat\lambda_e}
\Big(
|[A(W_S-R({\it \Delta})_S)]_e|
+|W_e-R({\it \Delta})_e|
\Big)
\le
\frac{1+\kappa_A}{\hat\lambda_{\min,S^c}}
(\|W\|_{\max}+\|R({\it \Delta})\|_{\max}).
\end{equation}

Since $|\dot Z_s|\le 1$ for $s\in S$,
\[
|[A\hat\Lambda_S\dot Z_S]_e|
=
\left|\sum_{s\in S}A_{es}\hat\lambda_s \dot Z_s\right|
\le
\sum_{s\in S}|A_{es}|\hat\lambda_s .
\]

Then
\begin{equation}
\label{eq:leakage_uniform}
\frac{1}{\hat\lambda_e}
|[A\hat\Lambda_S\dot Z_S]_e|
\le \hat\psi.
\end{equation}

Combining \eqref{eq:noise_uniform} and \eqref{eq:leakage_uniform} in
\eqref{eq:ZSc_component_bound} and taking $\max_{e\in S^c}$ yields
\begin{equation}
\label{eq:lemma3_bound}
\|\dot Z_{S^c}\|_{\max}
\le
\frac{1+\kappa_A}{\hat\lambda_{\min,S^c}}
\Big(\|W\|_{\max}+\|R({\it \Delta})\|_{\max}\Big)
+\hat\psi .
\end{equation}

If \eqref{eq:lemma3_sufficient} holds and $\hat\psi\le 1-5\alpha/8$ hold, then
\[
\|\dot Z_{S^c}\|_{\max}
\le \frac{\alpha}{2} + \left(1-\frac{5\alpha}{8} \right) < 1.
\]

\end{proof}
\begin{lemma}[Sign recovery]
\label{lem:sign}
If $\|\widetilde C-\Sigma^{-1}\|_{\max}\le \min_{(i,j)\in E}|[\Sigma^{-1}]_{ij}|/2$,
then $\mathrm{sign}(\widetilde C_{ij})=\mathrm{sign}([\Sigma^{-1}]_{ij})$ for all $(i,j)\in E$.
\end{lemma}
\begin{lemma}
\label{lem:psi_transfer_explicit}
 If we have $\max_{1\leq i\leq d}\left|\hat\omega_i-\omega^\ast_i\right|\le t_\omega$ and
\[
\varepsilon_n:=\frac{2t_\omega}{\delta}<1,
\]
 then for all off-diagonal indices $s,e$,
\[
\frac{\hat\lambda_s}{\hat\lambda_e}
\le \frac{1+\varepsilon_n}{1-\varepsilon_n}\frac{\lambda^\ast_s}{\lambda^\ast_e},
\qquad\text{and hence}\qquad
\hat\psi\le \frac{1+\varepsilon_n}{1-\varepsilon_n}\psi^\ast.
\]

In particular, if $\psi^\ast\le 1-\alpha$ and $\varepsilon_n\le \alpha/8$, then
$\hat\psi\le 1-5\alpha/8$.
\end{lemma}

\begin{proof}
For any $i\ne j$,
\[
|\hat\lambda_{ij}-\lambda^\ast_{ij}|=\delta\big|(\hat\omega_i-\omega^\ast_i)+(\hat\omega_j-\omega^\ast_j)\big|
\le 2\delta t_\omega.
\]

Also $\lambda^\ast_{ij}\ge \delta^2$, we have
\[
\frac{|\hat\lambda_{ij}-\lambda^\ast_{ij}|}{\lambda^\ast_{ij}}
\le \frac{2\delta t_\omega}{\delta^2}=\frac{2t_\omega}{\delta}=\varepsilon_n.
\]
Thus, \[(1-\varepsilon_n)\lambda^\ast_{ij}\le \hat\lambda_{ij}\le (1+\varepsilon_n)\lambda^\ast_{ij},\]
implying \[\frac{\hat\lambda_s}{\hat\lambda_e} \le \frac{1+\varepsilon_n}{1-\varepsilon_n}\frac{\lambda_s^\ast}{\lambda_e^\ast}.\]
Taking maxima in the definition of $\hat\psi,\psi^\ast$ yields the inequality 
\[\hat\psi\le \frac{1+\varepsilon_n}{1-\varepsilon_n}\psi^\ast.
\]

If $\varepsilon_n\le \alpha/8$, then \[\frac{1+\varepsilon_n}{1-\varepsilon_n}\le 1+3\varepsilon_n\le 1+\frac{3\alpha}{8},\]
and therefore \[\hat\psi \le \left(1+\frac{3\alpha}{8}\right)(1-\alpha)\le 1-\frac{5\alpha}{8}.\]
\end{proof}

\begin{lemma}[Concentration Inequality I]\label{concentration1}
There exist  universal constants $C_1>0, c_1>0$ such that for any $\tau>2$,
\[
\mathbb{P}\left(\|W\|_{\max} \le C_1\sigma_{\max}^{2}\sqrt{\frac{\tau\log d}{n}}\right)
\ \ge\ 1-2d^{2-\tau}
\] holds when $n\geq  c_1\tau \log d.$
\end{lemma}

\begin{proof}
Let $Y_t:=\bm{x}^{t}_i \bm{x}^{t}_j-\Sigma_{ij}$. Under Gaussianity,
$Y_t$ is sub-exponential with $\|Y_t\|_{\psi_1}\le K_1\sigma_{\max}^{2}$ for a universal $K_1$.
The Bernstein concentration inequality for sub-exponential variables gives
\begin{equation}\label{concen1}
\mathbb{P}\left(\left|\frac1n\sum_{t=1}^n Y_t\right|\ge u\right)
\le 2\exp\!\left(-C_0 n \min\Big\{\frac{u^2}{K_1^2\sigma_{\max}^{4}},\frac{u}{K_1\sigma_{\max}^{2}}\Big\}\right).
\end{equation}

We take
\[
u=C_1\sigma_{\max}^{2}\sqrt{\frac{\tau\log d}{n}}.
\]
If 
\[
n\ge \Big(\frac{C_1}{K_1}\Big)^2 \tau\log d,
\]
we have  $u\le K_1\sigma_{\max}^2$, so the minimum in the concentration inequality \eqref{concen1} is achieved by the quadratic term. Hence
\[
\mathbb{P}(|W_{ij}|\ge u)
\le 2\exp\!\left(-C_0 n \frac{u^2}{K_1^2\sigma_{\max}^{4}}\right)
=2\exp\!\left(-C_0\frac{C_1^2}{K_1^2}\tau\log d\right).
\]

Choosing $C_1\ge K_1/\sqrt{C_0}$ results in $\mathbb{P}(|W_{ij}|\ge u)\le 2d^{-\tau}$.
The union bound over $d^2$ pairs gives the result.

\end{proof}

\begin{lemma}[Concentration Inequality II]
\label{concentration2}
There exist a universal constant $C_2> 0$ such that for any $\tau > 1$,
\[
\mathbb{P}\!\left(\max_{1\le i\le d}\left|\mathbb{E}_{\mathbb{P}_n} [\vert\bm{x}_i\vert] - \mathbb{E}[\vert\bm{x}_i\vert]\right|
\le C_2\sigma_{\max}\sqrt{\frac{\tau\log d}{n}}\right)
\ \ge\ 1-2d^{1-\tau}.
\]
\end{lemma}

\begin{proof}
Let $Z_t := \vert\bm{x}^t_i\vert - \mathbb{E}[\vert\bm{x}_i\vert]$.  $Z_t$ is a sub-Gaussian random variable with
$\|Z_t\|_{\psi_2} \le K_2 \sigma_{\max}$
for some universal constant $K_2$.

We use the Hoeffding-type concentration bound:
\[
\mathbb{P}\left(\left|\frac{1}{n}\sum_{t=1}^n Z_t\right| \ge u\right)
\le 2\exp\!\left(-\frac{c n u^2}{K_2^2 \sigma_{\max}^2}\right).
\]
Take \[u = C_2 \sigma_{\max} \sqrt{\frac{\tau \log d}{n}},\quad C_2 = \frac{K_2}{\sqrt{c}},\]  and substitute into the exponent:
\[
-\frac{c n}{K_2^2 \sigma_{\max}^2} \left( C_2^2 \sigma_{\max}^2 \frac{\tau \log d}{n} \right) = - \tau \log d.
\]

Applying the union bound over $d$ indices gives the result.
\end{proof}

Equipped with the lemmas above, we proceed to prove Theorem \ref{thm:main}.
\begin{proof}
Define the event
\[
\mathcal E
:=
\left\{\|W\|_{\max} \le C_1\sigma_{\max}^{2}\sqrt{\frac{\tau\log d}{n}}\right\}
\cap
\left\{\max_i|\hat\omega_i-\omega_i^\ast|\le C_2\sigma_{\max}\sqrt{\frac{\tau\log d}{n}}\right\}.
\]

By Lemma \ref{concentration1}, Lemma \ref{concentration2} and a union bound,
\[
\mathbb{P}(\mathcal E)\ge 1-2d^{2-\tau}-2d^{1-\tau}\ge 1-4d^{2-\tau}.
\]

We complete our proof in six steps stated below.

\noindent\textbf{Step 1: control $\hat\psi$.}

On the event $\mathcal E$, we define
\[
t_\omega:=C_2\sigma_{\max}\sqrt{\frac{\tau\log d}{n}},
\qquad
\varepsilon_n:=\frac{2t_\omega}{\delta}=\frac{2C_2\sigma_{\max}\sqrt{\tau}}{c_\delta}.
\]

By the definition of $c_\delta$, we have $$\varepsilon_n\le \frac{\alpha}{8},$$ then Lemma \ref{lem:psi_transfer_explicit} implies
\[
\hat\psi \le 1-\frac{5\alpha}{8}.
\]

\noindent\textbf{Step 2: lower bound $\hat\lambda_{\min,S^c}$.}

For any off-diagonal $e=(i,j)\in S^c$,
\[
\hat\lambda_e
=\delta(\hat\omega_i+\hat\omega_j)+\delta^2
\ge \delta\big((\omega_i^\ast-t_\omega)+(\omega_j^\ast-t_\omega)\big)+\delta^2
\ge 2\delta(\omega_{\min}^\ast-t_\omega)+\delta^2.
\]

Since we have \[n\ge 2\pi C_2^2\left(\frac{\sigma_{\max}}{\sigma_{\min}}\right)^2 \tau\log d,\] then
\[
t_\omega=C_2\sigma_{\max}\sqrt{\frac{\tau\log d}{n}}
\le \frac{1}{2}\sqrt{\frac{2}{\pi}} \sigma_{\min}=\frac{\omega_{\min}^\ast}{2},
\]
and therefore
\begin{equation}
\label{eq:lambda_min_lower_thm}
\hat\lambda_{\min,S^c}:=\min_{e\in S^c}\hat\lambda_e \ge\ \omega_{\min}^\ast \delta.
\end{equation}

\noindent\textbf{Step 3: upper bound $\hat\lambda_{\max,S}$.}
\[
\hat\lambda_{ij}
=\delta(\hat\omega_i+\hat\omega_j)+\delta^2
\le 2\delta(\omega_{\max}^\ast+t_\omega)+\delta^2.
\]

We have $t_\omega\le \omega_{\min}^\ast/2\le \omega_{\max}^\ast/2$, hence
\[
\hat\lambda_{\max,S}:=\max_{(i,j)\in S}\hat\lambda_{ij}
\le 3\omega_{\max}^\ast \delta+\delta^2.
\]
Combining with $\|W\|_{\max}\le C_1\sigma_{\max}^2\sqrt{\tau\log d/n}$, $\delta=c_\delta\sqrt{\log d/n}$, and  $n\ge \log d$, we obtain
\[
\|W\|_{\max}+\hat\lambda_{\max,S}
\le \Big(C_1\sigma_{\max}^{2}\sqrt{\tau}+3\omega_{\max}^\ast c_\delta+c_\delta^2\Big)\sqrt{\frac{\log d}{n}}
= B\sqrt{\frac{\log d}{n}}.
\]
Define as in Lemma~\ref{lem:Delta_control_brouwer}
\[
r:=2\kappa_\Gamma\Big(\|W\|_{\max}+\hat\lambda_{\max,S}\Big),
\]
    then we can have that
\begin{equation}
\label{eq:r_bound_thm}
r\ \le\ 2\kappa_\Gamma B \sqrt{\frac{\log d}{n}}.
\end{equation}

\noindent\textbf{Step 4: support error and remainder control.}

The sample size condition  and \eqref{eq:r_bound_thm} implies 
\[
r\le \frac{1}{3\kappa_\Sigma s},
\quad
r\le \frac{1}{3\kappa_\Sigma^3\kappa_\Gamma s},
\]
so the condition \eqref{eq:r_small_brouwer} in Lemma \ref{lem:Delta_control_brouwer} holds.

Hence, the oracle error follows
\begin{equation}
\label{eq:Delta_bound_thm}
\|{\it\Delta}\|_{\max}\le r\le 2\kappa_\Gamma B \sqrt{\frac{\log d}{n}}.
\end{equation}

Then, Lemma \ref{lemma:remainderbound} gives
\begin{equation}\label{remainderboundequ}
\|R({\it\Delta})\|_{\max}
\le \frac{3}{2} s \kappa_\Sigma^3 \|\Delta\|_{\max}^2
\le \frac{3}{2} s \kappa_\Sigma^3 r^2
\le  6 s \kappa_\Sigma^3 \kappa_\Gamma^2 B^2 \frac{\log d}{n}.
\end{equation}

\noindent\textbf{Step 5: strict dual feasibility.}

By Lemma~\ref{lem:strict_dual_feasibility_weak}, it suffices to check:
\[
\hat\psi\le 1-\frac{5\alpha}{8},
\quad
\frac{1+\kappa_A}{\hat\lambda_{\min,S^c}}\Big(\|W\|_{\max}+\|R({\it\Delta})\|_{\max}\Big)\le \frac{\alpha}{2}.
\]
The first inequality is proved in Step 1.

For the second inequality, we split the left-hand side into the $W$-term and the $R$-term.

(a) It follows from \eqref{eq:lambda_min_lower_thm} and $\|W\|_{\max}\le C_1\sigma_{\max}^2\sqrt{\tau\log d/n}$ that
\[
\frac{1+\kappa_A}{\hat\lambda_{\min,S^c}}\|W\|_{\max}
\le 
\frac{1+\kappa_A}{\omega_{\min}^\ast\delta} C_1\sigma_{\max}^2\sqrt{\frac{\tau\log d}{n}}
=
\frac{(1+\kappa_A)C_1\sigma_{\max}^2\sqrt{\tau}}{\omega_{\min}^\ast c_\delta}.
\]

Since $\omega_{\min}^\ast=\sqrt{2/\pi} \sigma_{\min}$, and by the definition of $c_\delta$,
we get
\[
\frac{(1+\kappa_A)C_1\sigma_{\max}^2\sqrt{\tau}}{\omega_{\min}^\ast c_\delta}
\le
\frac{\alpha}{4}.
\]

(b) It follows from \eqref{eq:lambda_min_lower_thm} and \eqref{remainderboundequ} that
\[
\frac{1+\kappa_A}{\hat\lambda_{\min,S^c}}\|R({\it\Delta})\|_{\max}
\le
\frac{6(1+\kappa_A) s \kappa_\Sigma^3 \kappa_\Gamma^2 B^2}{\omega_{\min}^\ast c_\delta}\sqrt{\frac{\log d}{n}}.
\]

The condition of the sample size $n$ implies
\[
\sqrt{\frac{\log d}{n}}
\le
\frac{\alpha \omega_{\min}^\ast c_\delta}{24(1+\kappa_A) s \kappa_\Sigma^3 \kappa_\Gamma^2 B^2},
\]
and 
\[
\frac{6(1+\kappa_A) s \kappa_\Sigma^3 \kappa_\Gamma^2 B^2}{\omega_{\min}^\ast c_\delta}\sqrt{\frac{\log d}{n}}
\le \frac{\alpha}{4}.
\]

Combining (a) and (b) indicates
\[
\frac{1+\kappa_A}{\hat\lambda_{\min,S^c}}\Big(\|W\|_{\max}+\|R({\it\Delta})\|_{\max}\Big)\le \frac{\alpha}{2}.
\]

Therefore, we apply Lemma~\ref{lem:strict_dual_feasibility_weak} and obtain strict dual feasibility
$\|\dot Z_{S^c}\|_{\max}<1$, implying the PDW solution coincides with the global optimum:
\[
\widetilde C=\dot C \quad\text{and}\quad \widetilde  C_{S^c}=0.
\]

\noindent\textbf{Step 6: convergence rate and sign recovery.}

Since $\widetilde  C=\dot C$ and $\dot C-\Sigma^{-1}={\it\Delta}$,  it follows from \eqref{eq:Delta_bound_thm} that,
\[
\|\widetilde C-\Sigma^{-1}\|_{\max}=\|{\it\Delta}\|_{\max}\le 2\kappa_\Gamma B \sqrt{\frac{\log d}{n}}.
\]

Finally, if \[\min_{(i,j)\in E}|[\Sigma^{-1}]_{ij}|\ge 4\kappa_\Gamma B\sqrt{\frac{\log d}{n}},\] then 
$\|\widetilde C-\Sigma^{-1}\|_{\max}\le \min_{(i,j)\in E}|[\Sigma^{-1}]_{ij}|/2$, and Lemma~\ref{lem:sign} yields sign recovery.
\end{proof}
\section{Proof of Proposition \ref{prop:scale-inflation-mitigation}}
\begin{proof}
By the property of the Kronecker product,
\[
\Gamma=\Sigma\otimes\Sigma=(DRD)\otimes(DRD)=(D\otimes D)(R\otimes R)(D\otimes D)
=(D\otimes D) \Gamma^{\dagger} (D\otimes D).
\]

Define $D_\Gamma:=D\otimes D$.  
Then, we have that
\[
\Gamma_{S^cS}=(D_\Gamma)_{S^cS^c} \Gamma^{\dagger}_{S^cS} (D_\Gamma)_{SS},
\quad
\Gamma_{SS}=(D_\Gamma)_{SS} \Gamma^{\dagger}_{SS} (D_\Gamma)_{SS}.
\]

Since $(D_\Gamma)_{SS}$ is diagonal and invertible, we have that
\[
(\Gamma_{SS})^{-1}
=\big((D_\Gamma)_{SS} \Gamma^{\dagger}_{SS} (D_\Gamma)_{SS}\big)^{-1}
=(D_\Gamma)_{SS}^{-1} (\Gamma^{\dagger}_{SS})^{-1} (D_\Gamma)_{SS}^{-1},
\]
\begin{equation}\label{eq:Astar-factor}
A
=\Gamma_{S^cS}(\Gamma_{SS})^{-1}=(D_\Gamma)_{S^cS^c} 
\Gamma^{\dagger}_{S^cS}(\Gamma^{\dagger}_{SS})^{-1}
(D_\Gamma)_{SS}^{-1}=(D_\Gamma)_{S^cS^c} A^{\dagger} (D_\Gamma)_{SS}^{-1}.
\end{equation}

Let $d_u:=\big[D_\Gamma\big]_{uu}$.
It follows from \eqref{eq:Astar-factor} that
\[
A_{es}=d_e A^{\dagger}_{es} d_s^{-1}.
\]

Hence,
\[|A_{es}|=|A^{\dagger}_{es}|\frac{d_e}{d_s}.\]

Therefore, 
\[
\max_{e\in S^c}\sum_{s\in S}|A_{es}|
\le \max_{e\in S^c} d_e\max_{s\in S} d_s^{-1}\|A^{\dagger}\|_\infty.
\]

Since we have $d_u=\sqrt{\Sigma_{ii}}\sqrt{\Sigma_{jj}}$, we have that
\[
\max_{e\in S^c}\sum_{s\in S}|A_{es}|
\le \Big(\frac{\sigma_{\max}}{\sigma_{\min}}\Big)^2\|A^{\dagger}\|_\infty,
\]
which proves the first inequality.

By definition, we have that
\[
\psi^\ast=\max_{e\in S^c}\sum_{s\in S}|A_{es}|\frac{\lambda_s^\ast}{\lambda_e^\ast}
=\max_{e\in S^c}\sum_{s\in S}|A^{\dagger}_{es}|\frac{d_e}{d_s}\frac{\lambda_s^\ast}{\lambda_e^\ast}.
\]

If we define $w_u:=\lambda_u^\ast/d_u$, then we have that
\begin{equation}\label{defofpsi}
\psi^\ast=\max_{e\in S^c}\sum_{s\in S}|A^{\dagger}_{es}|\frac{w_s}{w_e}
\le \frac{\max_{s\in S}w_s}{\min_{e\in S^c}w_e} \max_{e\in S^c}\sum_{s\in S}|A^{\dagger}_{es}|
=\frac{\max_{s\in S}w_s}{\min_{e\in S^c}w_e} \|A^{\dagger}\|_\infty.
\end{equation}

Recall that for $i\neq j$,
\[
\lambda_{ij}^\ast=\delta(\omega_i^\ast+\omega_j^\ast)+\delta^2,
\qquad \omega_i^\ast=\sqrt{2/\pi} \sqrt{\Sigma_{ii}}.
\]

 Then for $i\neq j$,
\[
w_{(i,j)}=\frac{\lambda_{ij}^\ast}{\sqrt{\Sigma_{ii}}\sqrt{\Sigma_{jj}}}
=\delta \sqrt{\frac{2}{\pi}}\Big(\frac{1}{\sqrt{\Sigma_{ii}}}+\frac{1}{\sqrt{\Sigma_{jj}}}\Big)+\frac{\delta^2}{\sqrt{\Sigma_{ii}}\sqrt{\Sigma_{jj}}}.
\]

Using $\sigma_{\min}\le \sqrt{\Sigma_{ii}},\sqrt{\Sigma_{jj}}\le \sigma_{\max}$, we have the bounds
\[
w_{(i,j)}\le \delta \sqrt{\frac{2}{\pi}}\frac{2}{\sigma_{\min}}+\frac{\delta^2}{\sigma_{\min}^2},
\qquad
w_{(i,j)}\ge \delta \sqrt{\frac{2}{\pi}} \frac{2}{\sigma_{\max}}+\frac{\delta^2}{\sigma_{\max}^2}.
\]

Therefore,
\begin{align*}
\frac{\max_{s\in S}w_s}{\min_{e\in S^c}w_e}
\le
\Big(\frac{\sigma_{\max}}{\sigma_{\min}}\Big)^2
\frac{2\delta \sqrt{\frac{2}{\pi}} \sigma_{\min}+\delta^2}{2\delta \sqrt{\frac{2}{\pi}} \sigma_{\max}+\delta^2}.
\end{align*}

Since $10\delta<\sqrt{2/\pi}\sigma_{\min}$,
\[
2\delta \sqrt{\frac{2}{\pi}} \sigma_{\min}+\delta^2 \le 21\delta \sqrt{\frac{2}{\pi}} \sigma_{\min}/10,
\qquad
2\delta \sqrt{\frac{2}{\pi}} \sigma_{\max}+\delta^2 \ge 2\delta \sqrt{\frac{2}{\pi}} \sigma_{\max},
\]
and hence
\[
\frac{2\delta \sqrt{\frac{2}{\pi}} \sigma_{\min}+\delta^2}{2\delta \sqrt{\frac{2}{\pi}} \sigma_{\max}+\delta^2}
\le\frac{21}{20}\frac{\sigma_{\min}}{\sigma_{\max}}.
\]

Then, we have that 
\[
\frac{\max_{s\in S}w_s}{\min_{e\in S^c}w_e}
\le \frac{21}{20}\frac{\sigma_{\max}}{\sigma_{\min}}.
\]

Plugging into \eqref{defofpsi} gives
\[
\psi^\ast \le \frac{21}{20}\Big(\frac{\sigma_{\max}}{\sigma_{\min}}\Big)\|A^{\dagger}\|_\infty,
\]
which proves the second inequality.
\end{proof}

\section{Proof of Proposition \ref{regularizationprop}}
We first state two propositions before we prove Proposition \ref{regularizationprop}.
\begin{proposition}\label{peturbproposition}
 For vector $b\in\mathbb{R}^d$,  positive definite matrix $Q\in\mathbb{R}^{d\times d}$ and $p\in(1,\infty)$,  we have that
    \begin{equation}\label{perturbedprob}\max_{\Vert \Delta\Vert_p \leq 1} b^{\top}\Delta +\varepsilon \Delta^\top Q \Delta= \Vert b \Vert_q+\varepsilon \widetilde{\Delta}^{\top} Q  \widetilde{\Delta}+o(\varepsilon),\end{equation}
    as $\varepsilon\to 0$, where $1/p+1/q=1$, and $\widetilde{\Delta}$ is the unique solution to the problem \[\max_{\Vert \Delta \Vert_p \leq 1} b^{\top} \Delta.\]
\end{proposition}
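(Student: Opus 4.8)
The plan is to expand the objective $f(\Delta) = b^\top \Delta + \varepsilon\, \Delta^\top Q \Delta$ around the maximizer of the unperturbed problem $\max_{\Vert\Delta\Vert_p\le 1} b^\top\Delta$, whose value is $\Vert b\Vert_q$ by Hölder duality and whose maximizer is the unique point $\widetilde\Delta$ on the sphere $\Vert\Delta\Vert_p = 1$ at which $b^\top\Delta$ is maximized. Since $p\in(1,\infty)$ and (implicitly) $b\neq 0$, the $\ell_p$-ball is strictly convex and smooth, so $\widetilde\Delta$ is unique and given explicitly by the normalized Hölder pairing $\widetilde\Delta = \sign(b)\odot|b|^{q-1}/\Vert b\Vert_q^{q-1}$. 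I would first establish the lower bound: evaluating $f$ at the fixed feasible point $\widetilde\Delta$ gives $f(\widetilde\Delta) = \Vert b\Vert_q + \varepsilon\,\widetilde\Delta^\top Q\widetilde\Delta$, so the maximum is at least this, giving one direction with no error term at all.

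For the matching upper bound, the key is that the optimal solution $\Delta_\varepsilon$ of the perturbed problem converges to $\widetilde\Delta$ as $\varepsilon\to 0$. I would argue this by a standard compactness/uniqueness argument: $\Delta_\varepsilon$ lies in the compact $\ell_p$-ball, any subsequential limit $\Delta_0$ must be feasible and must satisfy $b^\top\Delta_0 \ge \Vert b\Vert_q$ (since $f(\Delta_\varepsilon)\ge f(\widetilde\Delta) = \Vert b\Vert_q + O(\varepsilon)$ and the quadratic term is $O(\varepsilon)$ uniformly), hence $\Delta_0 = \widetilde\Delta$ by uniqueness of the unperturbed maximizer; therefore $\Delta_\varepsilon\to\widetilde\Delta$. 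Then I would write
\[
f(\Delta_\varepsilon) = b^\top\Delta_\varepsilon + \varepsilon\,\Delta_\varepsilon^\top Q\Delta_\varepsilon \le \Vert b\Vert_q + \varepsilon\,\Delta_\varepsilon^\top Q\Delta_\varepsilon = \Vert b\Vert_q + \varepsilon\,\widetilde\Delta^\top Q\widetilde\Delta + \varepsilon(\Delta_\varepsilon^\top Q\Delta_\varepsilon - \widetilde\Delta^\top Q\widetilde\Delta),
\]
using $b^\top\Delta_\varepsilon\le\Vert b\Vert_q$ for the feasible point $\Delta_\varepsilon$; the continuity of the quadratic form together with $\Delta_\varepsilon\to\widetilde\Delta$ makes the last parenthesized quantity $o(1)$, so the whole correction is $\varepsilon\,\widetilde\Delta^\top Q\widetilde\Delta + o(\varepsilon)$. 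Combining with the lower bound yields \eqref{perturbedprob}.

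The main obstacle is the convergence $\Delta_\varepsilon\to\widetilde\Delta$, i.e., stability of the argmax under perturbation; everything else (Hölder duality, evaluating $f$ at a fixed point, continuity of a quadratic form) is routine. The cleanest route is the subsequence-extraction argument sketched above, which only needs uniqueness of $\widetilde\Delta$ and boundedness of the quadratic term on the ball, both of which hold under the stated hypotheses $p\in(1,\infty)$ and $Q$ positive definite (boundedness actually only needs $Q$ bounded). One should double-check the edge behavior when some coordinate of $b$ vanishes — the formula for $\widetilde\Delta$ still makes sense and uniqueness still holds for $p\in(1,\infty)$ — and note that the case $p=\infty$ is excluded here precisely because uniqueness of the unperturbed maximizer can fail, which is why Proposition \ref{regularizationprop} will need the separate regularity condition $\bm{P}([C\bm{x}]_i = 0) = 0$ when it specializes to $p=\infty$.
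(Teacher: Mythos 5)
Your proof is correct, but it takes a genuinely different route from the paper. The paper first argues (via a KKT/contradiction argument) that the perturbed problem has a unique solution as $\varepsilon\to 0$, then invokes Danskin's theorem to compute the one-sided derivative of the value function $\phi_{b,Q}(\varepsilon)=\max_{\Vert\Delta\Vert_p\le 1}\{b^\top\Delta+\varepsilon\Delta^\top Q\Delta\}$ at $\varepsilon=0$, obtaining $\widetilde{\Delta}^\top Q\widetilde{\Delta}$, and finishes with a first-order Taylor expansion of $\phi_{b,Q}$. You instead sandwich the value: the lower bound $\Vert b\Vert_q+\varepsilon\widetilde{\Delta}^\top Q\widetilde{\Delta}$ is exact by evaluating at the feasible point $\widetilde{\Delta}$, and the upper bound follows from H\"older's inequality ($b^\top\Delta_\varepsilon\le\Vert b\Vert_q$) together with convergence $\Delta_\varepsilon\to\widetilde{\Delta}$, which you obtain by a compactness/uniqueness subsequence argument rather than by differentiating the value function. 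Both proofs ultimately rest on the same ingredient — uniqueness of the unperturbed maximizer for $p\in(1,\infty)$ and hence stability of the argmax — but your version is more elementary and self-contained (no envelope theorem), and your argmax-stability argument is arguably tighter than the paper's somewhat heuristic KKT contradiction. Your observation that $b\neq 0$ is implicitly required for $\widetilde{\Delta}$ to be well-defined and unique, and that this is exactly what the condition $\bm{P}([C\bm{x}]_i=0)=0$ supplies downstream, is a correct and worthwhile point that the paper leaves unstated; your remark on why $p=\infty$ must be treated separately likewise matches the paper's structure (Proposition \ref{peturbproposition2}).
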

\begin{proof}
We first prove that the optimization problem \eqref{perturbedprob} has a unique solution when $\varepsilon\to0$.

The Lagrangian function of problem  \eqref{perturbedprob} is 
\[ L_{\varepsilon}(\Delta, \lambda)= b^{\top}\Delta +\varepsilon \Delta^\top Q \Delta -\lambda (\Vert \Delta\Vert_p^p-1).\]

Since the constraint set $\Vert \Delta\Vert_p \leq 1$ is convex, it follows from the KKT condition that the solution to the problem, denoted by $\Delta_\ast$ and $\lambda_\ast$, follows the equations:
\begin{equation} \label{KKT}\Vert\Delta_\ast\Vert_p^p-1=0,
\quad b +\varepsilon 2Q\Delta_\ast - \lambda_\ast p\Vert  \Delta_\ast\Vert_p^{p-1}\vert\Delta_\ast\vert^{p-1}\odot \mathrm{sign} ( \Delta_\ast)=0.\end{equation}

Assume there are two distinct solutions to problem \eqref{perturbedprob} when $\varepsilon\to0$, denoted by $\Delta_\ast^1$ and $\Delta_\ast^2$.
Equivalently, both  $\Delta_\ast^1$ and $\Delta_\ast^2$ satisfy the equations in \eqref{KKT}, indicating 
\begin{equation}\label{contradict}2\varepsilon Q(\Delta_\ast^1- \Delta_\ast^2)=\lambda_\ast p (\vert \Delta_\ast^2\vert^{p-1}\odot \mathrm{sign}(\Delta_\ast^2)-\vert \Delta_\ast^1\vert^{p-1}\odot \mathrm{sign}(\Delta_\ast^1)),\end{equation}
\[\Vert\Delta_\ast^1\Vert_p=\Vert\Delta_\ast^2\Vert_p=1,\quad \Delta_\ast^1\not =\Delta_\ast^2\]

It is  easy to see that the left-hand side of \eqref{contradict} goes to $\bm{0}$ when $\varepsilon\to0$ while the right-hand side of \eqref{contradict} strictly not equal to $\bm{0}$ when $p\in(1,\infty)$ and $\Delta_\ast^1\not =\Delta_\ast^2$. This produces a contradiction. Thus, the optimization problem \eqref{perturbedprob} has a unique solution when $\varepsilon\to0$ and $p\in(1,\infty)$.

Then, since the objective function is convex, continuous, and differentiable w.r.t $\varepsilon$ for fixed $\Delta$, by Danskin's theorem \citep{bonnans2013perturbation}, if we denote \[\phi_{b,Q}(\varepsilon)=\max_{\Vert \Delta\Vert_p \leq 1} b^{\top}\Delta +\varepsilon \Delta^\top Q \Delta,\quad  \Delta_{b,Q}(\varepsilon) = \arg\max_{\Vert \Delta\Vert_p \leq 1} b^{\top}\Delta +\varepsilon \Delta^\top Q \Delta,\] then we have that 
\begin{equation*}\begin{aligned}\frac{d}{d\varepsilon}\phi_{b,Q}(\varepsilon)= \Delta_{b,Q}(\varepsilon) ^\top Q \Delta_{{b,Q}}(\varepsilon) ,\end{aligned}\end{equation*}

and 
\begin{equation*}\begin{aligned}\frac{d}{d\varepsilon}\phi_{b,Q}(\varepsilon)\Big\vert_{\varepsilon=0}= \widetilde{\Delta}^\top Q \widetilde{\Delta} ,\end{aligned}\end{equation*}

It follows from Taylor's expansion that 
\[\phi_{b,Q}(\varepsilon)= \phi_{b,Q}(0)+ \varepsilon \widetilde{\Delta}^\top Q \widetilde{\Delta}+o(\varepsilon)= \Vert b\Vert_{q}+ \varepsilon \widetilde{\Delta}^\top Q \widetilde{\Delta}+o(\varepsilon),\]
as $\varepsilon\to 0$.
\end{proof}

\begin{proposition}\label{peturbproposition2}
 For vector $b\in\mathbb{R}^d$ with $[b]_i\not= 0$ for $i=1,\ldots,d$,  positive definite matrix $Q\in\mathbb{R}^{d\times d}$,  we have that
    \begin{equation}\label{perturbedprob2}\max_{\Vert \Delta\Vert_\infty \leq 1} b^{\top}\Delta +\varepsilon \Delta^\top Q \Delta= \Vert b \Vert_1+\varepsilon \mathrm{sign}(b)^{\top} Q  \mathrm{sign}(b),\end{equation}
    as $\varepsilon\to 0$.
\end{proposition}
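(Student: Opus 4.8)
\textbf{Proof proposal for Proposition \ref{peturbproposition2}.}

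The plan is to mirror the strategy used in the proof of Proposition \ref{peturbproposition}, namely to view the left-hand side as a value function $\phi_{b,Q}(\varepsilon)=\max_{\Vert\Delta\Vert_\infty\leq 1} b^\top\Delta+\varepsilon\Delta^\top Q\Delta$ and apply Danskin's theorem. The key simplification in the $\ell_\infty$ case is that, unlike the $p\in(1,\infty)$ case, the unperturbed maximizer $\arg\max_{\Vert\Delta\Vert_\infty\leq 1} b^\top\Delta$ is already unique whenever $[b]_i\neq 0$ for all $i$: the maximum of $\sum_i [b]_i[\Delta]_i$ over the cube is attained only at the vertex $[\Delta]_i=\sign([b]_i)$, i.e. at $\Delta=\sign(b)$, with value $\Vert b\Vert_1$. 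First I would record this fact and note that, because $[b]_i\neq 0$ strictly, this vertex is the \emph{strict} maximizer, so there is a margin: for any other feasible $\Delta$, $b^\top\Delta\leq \Vert b\Vert_1 - c$ where $c=2\min_i|[b]_i|>0$.

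Next I would argue that for $\varepsilon$ small enough the perturbed maximizer stays at the same vertex. Since $\Delta^\top Q\Delta$ is bounded on the cube, say $|\Delta^\top Q\Delta|\leq M$ with $M=\Vert Q\Vert$ (operator norm times the diameter squared), any $\Delta$ with $b^\top\Delta\leq\Vert b\Vert_1-c$ satisfies $b^\top\Delta+\varepsilon\Delta^\top Q\Delta\leq \Vert b\Vert_1-c+\varepsilon M$, whereas $\Delta=\sign(b)$ gives value $\Vert b\Vert_1+\varepsilon\,\sign(b)^\top Q\sign(b)\geq \Vert b\Vert_1-\varepsilon M$. For $\varepsilon<c/(2M)$ the vertex $\sign(b)$ strictly beats every competitor bounded away from it; a compactness/continuity argument (the set of feasible $\Delta$ with $b^\top\Delta\geq\Vert b\Vert_1-c$ shrinks to the single point $\sign(b)$ as $c\to 0$, and on a neighborhood of that vertex the quadratic correction cannot overturn the linear term's strict preference for $\sign(b)$) shows the maximizer equals $\sign(b)$ exactly for all sufficiently small $\varepsilon>0$. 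Hence $\phi_{b,Q}(\varepsilon)=\Vert b\Vert_1+\varepsilon\,\sign(b)^\top Q\sign(b)$ identically for small $\varepsilon$, which is even stronger than the claimed $+o(\varepsilon)$ statement — the remainder is exactly zero.

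Alternatively, and perhaps more cleanly, I would exploit separability: on the cube the problem decomposes coordinatewise only in the linear part, but one can still bound things directly. Writing $\Delta=\sign(b)+\bm{h}$ with $\bm{h}$ ranging over a set that, near the optimum, forces $[\bm{h}]_i\in[-2,0]\cdot\sign([b]_i)$, the objective becomes $\Vert b\Vert_1 + b^\top\bm{h}+\varepsilon(\sign(b)+\bm{h})^\top Q(\sign(b)+\bm{h})$, and since $b^\top\bm{h}\leq 0$ with equality iff $\bm{h}=0$ (given the sign constraints forced by optimality), for small $\varepsilon$ the $O(\varepsilon)$ terms cannot compensate a strictly negative $b^\top\bm{h}$, pinning $\bm{h}=0$. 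The main obstacle I anticipate is making the "for small enough $\varepsilon$ the maximizer is exactly the vertex" step fully rigorous — one must rule out the maximizer drifting along an edge or face of the cube — but this is handled by the margin estimate $c=2\min_i|[b]_i|>0$ combined with the uniform bound $M$ on the quadratic term, exactly as sketched above. No Danskin machinery is then even required; the identity holds on the nose.
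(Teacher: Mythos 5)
Your conclusion is right and your second ("alternative") argument is sound, but your route differs from the paper's in one important way, and your primary argument contains a misstatement you should not leave in. The paper first reduces to vertices by convexity: for $\varepsilon>0$ the objective $b^\top\Delta+\varepsilon\Delta^\top Q\Delta$ is a convex function of $\Delta$, so its maximum over the compact convex cube is attained at an extreme point, i.e.\ $[\Delta^\ast]_i\in\{-1,1\}$; it then compares the finitely many vertices, where flipping any coordinate costs $2|[b]_i|>0$ linearly but only $O(\varepsilon)$ quadratically. Your margin claim "for any other feasible $\Delta$, $b^\top\Delta\leq\Vert b\Vert_1-c$ with $c=2\min_i|[b]_i|$" is false for non-vertex points (take $\Delta=(1-t)\sign(b)$ with $t\to0$); it holds only between vertices, which is exactly why the reduction to vertices is needed, and your "compactness/continuity" patch is too vague to supply it. The clean fixes are either the paper's convexity-to-extreme-point step or your own decomposition $\Delta=\sign(b)+\bm{h}$ made quantitative: feasibility forces $\sign([b]_i)[\bm{h}]_i\leq0$, so $b^\top\bm{h}\leq-\min_i|[b]_i|\,\Vert\bm{h}\Vert_1$, while the $\varepsilon$-correction $2\varepsilon\sign(b)^\top Q\bm{h}+\varepsilon\bm{h}^\top Q\bm{h}$ is bounded by $\varepsilon K\Vert\bm{h}\Vert_1$ for a constant $K$ depending only on $Q$ and $d$ (using $\Vert\bm{h}\Vert_\infty\leq2$); the crucial point, which you should state explicitly, is that both the linear deficit and the correction scale \emph{linearly} in $\Vert\bm{h}\Vert_1$, so for $\varepsilon<\min_i|[b]_i|/K$ the maximizer is pinned at $\bm{h}=0$ uniformly. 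This second route is self-contained, avoids the vertex reduction entirely, and, like the paper's, yields the identity exactly (zero remainder) for all sufficiently small $\varepsilon>0$.
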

\begin{proof}
When $\varepsilon =0$,  the optimization problem $\max_{\Vert \Delta\Vert_\infty \leq 1} b^{\top}\Delta$ has the unique solution at $\mathrm{sign}(b)$. 

When $\varepsilon \not =0$, the objective function is convex and the constraint set is a convex compact set. In this way, the solution to the optimization problem is taken at the boundary of the constraint set, i.e., the solution $\Delta^\ast$ satisfies that $[\Delta^\ast]_i = \{-1,1\}$. Since $[b]_i\not =0$, when $\varepsilon$ is very small, flipping the sign of  $\Delta^\ast$ will decrease the value of the objective function. Thus, the optimal solution is  $\mathrm{sign}(b)$ when $\varepsilon\to 0$.
\end{proof}

Now we begin to prove Proposition \ref{regularizationprop}.
\begin{proof}
Notice we have that 
\[\max_{\Vert \bm{\Delta}\Vert_p\leq \delta}(\bm{x}+\bm{\Delta})^\top  C (\bm{x}+\bm{\Delta})= \bm{x}^\top  C \bm{x}+\delta \max_{\Vert \bm{\Delta}\Vert_p\leq 1} \left\{2( C\bm {x})^\top \bm{\Delta}+\delta \bm{\Delta}^\top  C \bm{\Delta}\right\}.\]


In the probability space $(\bm{P},  C, \mathcal{F})$ for variable $\bm{x}$, for each $\bm{\omega}\in  C$,
it follows from Proposition \ref{peturbproposition} and Proposition \ref{peturbproposition2} that 
\[\max_{\Vert \bm{\Delta}\Vert_p\leq 1} \left\{2( C\bm {x}(\bm{\omega}))^\top \bm{\Delta}+\delta \bm{\Delta}^\top  C \bm{\Delta}\right\} = 2\Vert  C\bm {x}(\bm{\omega})\Vert_{q}+ \delta \bm{v}_{\bm{x}(\bm{\omega})}^\top  C \bm{v}_{\bm{x}(\bm{\omega})} +\mathcal{R}(\bm{\omega}),\]
where \[\lim_{\delta\to 0 } \frac{\mathcal{R}(\bm{\omega})}{\delta}=0. \]

Since we have that 
\begin{equation}\begin{aligned}\vert \mathcal{R}(\bm{\omega})\vert& = \left\vert\max_{\Vert \bm{\Delta}\Vert_p\leq 1} \left\{2( C\bm {x}(\bm{\omega}))^\top \bm{\Delta}+\delta \bm{\Delta}^\top  C \bm{\Delta}\right\}- 2\Vert  C\bm {x}(\bm{\omega})\Vert_{q}+ \delta \bm{v}_{\bm{x}(\bm{\omega})}^\top  C \bm{v}_{\bm{x}(\bm{\omega})}\right\vert\\
&\leq 4 \Vert  C\bm {x}(\bm{\omega})\Vert_{q}+2\delta \max_{\Vert y \Vert_p\leq 1,y\in\mathbb{R}^d} y^\top  C y\\
&\leq 4 \Vert  C\Vert_{p\to q} \Vert\bm{x}(\bm{\omega})\Vert_p+2\delta \Vert  C\Vert_{p\to q},\end{aligned}\end{equation}
implying that $ \mathcal{R}(\bm{\omega})$ is upper bounded by an integral function. Due to the dominated convergence theorem, we have that 
\[\lim_{\delta\to 0}\frac{\mathbb{E}_{\bm{P}}[ \mathcal{R}] }{\delta} = \int_{\bm{\omega}\in  C} \lim_{\delta\to 0}\frac{\mathcal{R}(\bm{\omega})}{\delta} d\bm{P}(\bm{\omega})=0.\]
\end{proof}

\section{Proof of Theorem \ref{asymptoticresults1}}
\begin{proof}        
We define \[\Psi_n( C)= -\log \det ( C)+\mathbb{E}_{\mathbb{P}_n} \left[\max_{\Vert \bm{\Delta}\Vert_p\leq \delta}(\bm{x}+\bm{\Delta})^\top  C (\bm{x}+\bm{\Delta})\right].\]

For $ C\succ 0$, the function $\Psi_n( C)$  is convex since we have the following:
\begin{equation}\begin{aligned}&\Psi_n(\alpha C_1+(1-\alpha) C_2)\\
=& -\log \det (\alpha C_1+(1-\alpha) C_2) + \mathbb{E}_{\mathbb{P}_n} \left[\max_{\Vert \bm{\Delta}\Vert_p\leq \delta}(\bm{x}+\bm{\Delta})^\top (\alpha C_1+(1-\alpha) C_2) (\bm{x}+\bm{\Delta})\right]\\
\leq& -\alpha \log \det (C_1) -(1-\alpha)\log \det (C_2)\\
+&\alpha\mathbb{E}_{\mathbb{P}_n} \left[\max_{\Vert \bm{\Delta}\Vert_p\leq \delta}(\bm{x}+\bm{\Delta})^\top  C_1  (\bm{x}+\bm{\Delta}) \right] +  (1-\alpha)\mathbb{E}_{\mathbb{P}_n} \left[\max_{\Vert \bm{\Delta}\Vert_p\leq \delta}(\bm{x}+\bm{\Delta})^\top  C_2  (\bm{x}+\bm{\Delta}) \right],\end{aligned}\end{equation}
where the second inequality comes from the convexity of the function $-\log\det( C)$.

\noindent\textbf{Case 1}: $\gamma=1/2$.

It follows from Proposition \ref{regularizationprop} that 
\begin{equation*}
    \begin{aligned}
          &V_n(\bm{U})\\ \overset{\vartriangle}{=} &n\left(\Psi_n(\Sigma^{-1}+\frac{1}{\sqrt{n}}\bm{U})-\Psi (\Sigma^{-1})\right)\\     
          =&-n\log \det (\Sigma^{-1}+ \frac{1}{\sqrt{n}}\bm{U})+n\log \det (\Sigma^{-1})+ \sqrt{n}\mathbb{E}_{\mathbb{P}_n} \left[\bm{x}^\top  \bm{U} \bm{x}\right]\\
         &+2\eta\sqrt{n}\left(\mathbb{E}_{\mathbb{P}_n} \left[ \left\Vert (\Sigma^{-1}+\frac{1}{\sqrt{n}}\bm{U})\bm{x}\right\Vert_{q}\right]-\mathbb{E}_{\mathbb{P}_n}\left[ \left\Vert  \Sigma^{-1} \bm{x}\right\Vert_{q}\right]\right)\\
         &+ \eta^2 \mathbb{E}_{\mathbb{P}_n} \left[ \bm{v}_{\Sigma ^{-1}\bm{x}}^\top  \Sigma^{-1}  \bm{v}_{\Sigma ^{-1}\bm{x}}\right]-\eta^2 \mathbb{E}_{\mathbb{P}_n} \left[ \bm{v}_{(\Sigma^{-1}+\frac{1}{\sqrt{n}} \bm{U})\bm{x}}^\top  \left(\Sigma^{-1}+\frac{1}{\sqrt{n}} \bm{U}\right)  \bm{v}_{(\Sigma^{-1}+\frac{1}{\sqrt{n}} \bm{U})\bm{x}}\right]\\
         &+o(1).
    \end{aligned}
\end{equation*}

It follows from the proof of Theorem  1 in \cite{yuan2007model} that 
\[-\log \det (\Sigma^{-1}+ \frac{1}{\sqrt{n}}\bm{U})+\log \det (\Sigma^{-1})= -\frac{\mathrm{tr}(\bm{U}\Sigma)}{\sqrt{n}}+\frac{1}{2}\frac{\mathrm{tr}(\bm{U}\Sigma \bm{U}\Sigma)}{n}+o\left(\frac{1}{n}\right).\]

We can rewrite the term $\mathbb{E}_{\mathbb{P}_n} \left[\bm{x}^\top  \bm{U} \bm{x}\right]$ as $\mathrm{tr}(\bm{U}\bar{A})$, where 
\[\bar{A}=\frac{1}{n}\sum_{i=1}^n \bm{x}_i\bm{x}_i^\top.\]

In this way, we have that 
\begin{equation}\begin{aligned}
&-n\log \det (\Sigma^{-1}+ \frac{1}{\sqrt{n}}\bm{U})+n\log \det (\Sigma^{-1})+ \sqrt{n}\mathbb{E}_{\mathbb{P}_n} \left[\bm{x}^\top  \bm{U} \bm{x}\right]\\
=& \mathrm{tr}(\bm{U}\sqrt{n}(\bar{A}-\Sigma)) +\frac{1}{2}\mathrm{tr}(\bm{U}\Sigma \bm{U}\Sigma)+o(1)\\
\Rightarrow & -\mathrm{tr}(\bm{U}\mathbf{G}) + \frac{1}{2}\mathrm{tr}(\bm{U}\Sigma \bm{U}\Sigma),
\end{aligned} \end{equation}
where $\mathrm{vec} (\mathbf{G})\sim \mathcal{N}(\bm{0}, \Lambda)$, and $\Lambda$ is such that 
\[\mathrm{cov}([\mathbf{G}]_{ij},[\mathbf{G}]_{i^\prime j^\prime})=\mathrm{cov}(\bm{x}_i\bm{x}_j, \bm{x}_{i^\prime}\bm{x}_{j^\prime}).\]

Notice that we also have
\begin{equation}\begin{aligned}\sqrt{n}\left(\mathbb{E}_{\mathbb{P}_n} \left[ \left\Vert (\Sigma^{-1}+\frac{1}{\sqrt{n}}\bm{U})\bm{x}\right\Vert_{q}\right]-\mathbb{E}_{\mathbb{P}_n}\left[ \left\Vert  \Sigma^{-1} \bm{x}\right\Vert_{q}\right]\right)\\\to_p \mathbb{E}\left[\frac{\left( \mathrm{sign}(\Sigma^{-1}\bm{x}) \odot |\Sigma^{-1}\bm{x}|^{q-1} \right)^\top \bm{U} \bm{x} }{\Vert\Sigma^{-1}\bm{x}\Vert_q^{q-1}}\right].  
\end{aligned}\end{equation}

Thus, 
\[V_n(\bm{U})\Rightarrow V(\bm{U}),\]
where 
\[ V(\bm{U}) = -\mathrm{tr}(\bm{U}\mathbf{G}) +\frac{1}{2}\mathrm{tr}(\bm{U}\Sigma \bm{U}\Sigma) + 2\eta\mathbb{E}\left[\frac{\left( \mathrm{sign}(\Sigma^{-1}\bm{x}) \odot |\Sigma^{-1}\bm{x}|^{q-1} \right)^\top \bm{U} \bm{x} }{\Vert\Sigma^{-1}\bm{x}\Vert_q^{q-1}}\right].\]

We take the gradient of $V(\bm{U})$ w.r.t. $\bm{U}$ and obtain  that 
\[ -\mathbf{G}+ \Sigma \bm{U}^\ast \Sigma  +2\eta \mathbb{E}\left[\frac{\left( \mathrm{sign}(\Sigma^{-1}\bm{x}) \odot |\Sigma^{-1}\bm{x}|^{q-1} \right) \bm{x}^\top }{\Vert\Sigma^{-1}\bm{x}\Vert_q^{q-1}}\right] = 0,\]
\[\bm{U}^\ast  = \Sigma^{-1}  
\left(\mathbf{G} -2\eta\mathbb{E}\left[\frac{\left( \mathrm{sign}(\Sigma^{-1}\bm{x}) \odot |\Sigma^{-1}\bm{x}|^{q-1} \right) \bm{x}^\top }{\Vert\Sigma^{-1}\bm{x}\Vert_q^{q-1}}\right]\right) \Sigma^{-1}. \]

Since $V(\bm{U})$ is strongly convex due to the positive definiteness of $\Sigma$, we have that 
\[ \sqrt{n}\left(\widehat{ C}-\Sigma^{-1} \right)\Rightarrow  \Sigma^{-1}  
\left(\mathbf{G} -2\eta\mathbb{E}\left[\frac{\left( \mathrm{sign}(\Sigma^{-1}\bm{x}) \odot |\Sigma^{-1}\bm{x}|^{q-1} \right) \bm{x}^\top }{\Vert\Sigma^{-1}\bm{x}\Vert_q^{q-1}}\right]\right) \Sigma^{-1}.\]

\noindent\textbf{Case 2}: $\gamma>1/2$.
\begin{equation*}
    \begin{aligned}
          &V_n(\bm{U}) \\\overset{\vartriangle}{=} & n\left(\Psi_n(\Sigma^{-1}+\frac{1}{\sqrt{n}}\bm{U})-\Psi (\Sigma^{-1})\right) \\         
          =& -n\log \det (\Sigma^{-1}+ \frac{1}{\sqrt{n}}\bm{U})+n\log \det (\Sigma^{-1})+ \sqrt{n}\mathbb{E}_{\mathbb{P}_n} \left[\bm{x}^\top  \bm{U} \bm{x}\right]\\
         &+2\eta n^{1/2-\gamma}\sqrt{n}\left(\mathbb{E}_{\mathbb{P}_n} \left[ \left\Vert (\Sigma^{-1}+\frac{1}{\sqrt{n}}\bm{U})\bm{x}\right\Vert_{q}\right]-\mathbb{E}_{\mathbb{P}_n}[ \left\Vert  \Sigma^{-1} \bm{x}\right\Vert_{q}]\right)\\
         &+ \eta^2 n^{1-2\gamma} \left(\mathbb{E}_{\mathbb{P}_n} \left[ \bm{v}_{\Sigma ^{-1}\bm{x}}^\top  \Sigma^{-1}  \bm{v}_{\Sigma ^{-1}\bm{x}}\right]- \mathbb{E}_{\mathbb{P}_n} \left[ \bm{v}_{(\Sigma^{-1}+\frac{1}{\sqrt{n}} \bm{U})\bm{x}}^\top  \left(\Sigma^{-1}+\frac{1}{\sqrt{n}} \bm{U}\right)  \bm{v}_{(\Sigma^{-1}+\frac{1}{\sqrt{n}} \bm{U})\bm{x}}\right]\right)\\
         &+o(1).
    \end{aligned}
\end{equation*}

Since $\gamma>1/2$, we have that 
\[V_n(\bm{U})\Rightarrow V(\bm{U}),\]
where 
\[ V(\bm{U}) = -\mathrm{tr}(\bm{U}\mathbf{G}) +\frac{1}{2}\mathrm{tr}(\bm{U}\Sigma \bm{U}\Sigma).\]

 Then, we have that 
 \[ \sqrt{n}\left(\widehat{ C}-\Sigma^{-1} \right)\Rightarrow  \Sigma^{-1}  
\mathbf{G}   \Sigma^{-1}.\]

\noindent\textbf{Case 3}: $0<\gamma<1/2$.
\begin{equation*}
    \begin{aligned}
          &V_n(\bm{U}) \overset{\vartriangle}{=} \\& n^{2\gamma}\left(\Psi_n(\Sigma^{-1}+\frac{1}{n^\gamma}\bm{U})-\Psi (\Sigma^{-1})\right) \\         
          =& -n^{2\gamma}\log \det (\Sigma^{-1}+ \frac{1}{n^\gamma}\bm{U})+n^{2\gamma}\log \det (\Sigma^{-1})+ n^{\gamma}\mathbb{E}_{\mathbb{P}_n} \left[\bm{x}^\top  \bm{U} \bm{x}\right]\\
         &+2\eta n^{\gamma}\left(\mathbb{E}_{\mathbb{P}_n} \left[ \left\Vert (\Sigma^{-1}+\frac{1}{n^{\gamma}}\bm{U})\bm{x}\right\Vert_{q}\right]-\mathbb{E}_{\mathbb{P}_n}[ \left\Vert  \Sigma^{-1} \bm{x}\right\Vert_{q}]\right)\\
         &+ \eta^2  \left(\mathbb{E}_{\mathbb{P}_n} \left[ \bm{v}_{\Sigma ^{-1}\bm{x}}^\top  \Sigma^{-1}  \bm{v}_{\Sigma ^{-1}\bm{x}}\right]-\mathbb{E}_{\mathbb{P}_n} \left[ \bm{v}_{(\Sigma^{-1}+\frac{1}{n^\gamma} \bm{U})\bm{x}}^\top  \left(\Sigma^{-1}+\frac{1}{n^\gamma} \bm{U}\right)  \bm{v}_{(\Sigma^{-1}+\frac{1}{n^\gamma} \bm{U})\bm{x}}\right]\right)\\
         &+o(1).
    \end{aligned}
\end{equation*}

We have that 
\[-n^{2\gamma}\log \det (\Sigma^{-1}+ \frac{1}{n^\gamma}\bm{U})+n^{2\gamma}\log \det (\Sigma^{-1})+ n^{\gamma}\mathbb{E}_{\mathbb{P}_n} \left[\bm{x}^\top  \bm{U} \bm{x}\right] \Rightarrow \frac{1}{2}\mathrm{tr}(\bm{U}\Sigma \bm{U}\Sigma).\]

We conclude that 
\[V_n(\bm{U})\Rightarrow V(\bm{U}),\]
where 
\[ V(\bm{U}) =\frac{1}{2}\mathrm{tr}(\bm{U}\Sigma \bm{U}\Sigma) + 2\eta\mathbb{E}\left[\frac{\left( \mathrm{sign}(\Sigma^{-1}\bm{x}) \odot |\Sigma^{-1}\bm{x}|^{q-1} \right)^\top \bm{U} \bm{x} }{\Vert\Sigma^{-1}\bm{x}\Vert_q^{q-1}}\right].\]

Then, we have that 
\[ \sqrt{n}\left(\widehat{ C}-\Sigma^{-1} \right)\Rightarrow -2 \eta\Sigma^{-1}  
\mathbb{E}\left[\frac{\left( \mathrm{sign}(\Sigma^{-1}\bm{x}) \odot |\Sigma^{-1}\bm{x}|^{q-1} \right) \bm{x}^\top }{\Vert\Sigma^{-1}\bm{x}\Vert_q^{q-1}}\right]\Sigma^{-1}.\]
\end{proof}

\section{Proof of Theorem \ref{asymptoticdistributionsurrogate}}
\begin{proof}

We define \[\Phi_n( C)= -\log \mathrm{det}  C + \mathbb{E}_{\mathbb{P}_n}\left[\bm{x}^\top  C\bm{x}\right]+ 2\delta\sum_{j=1}^d \sum_{k=1}^d \left(\frac{1}{n}\sum_{i=1}^n \vert\bm{x}^i_k\vert\right) \vert [ C]_{kj}\vert+\delta^2 \Vert  C\Vert_{1,1}.\]
For $ C\succ 0$, the function $\Phi_n( C)$  is convex.

\noindent\textbf{Case 1}: $\gamma=1/2$.
\begin{equation*}
    \begin{aligned}
          W_n(\bm{U}) \overset{\vartriangle}{=} &n\left(\Phi_n(\Sigma^{-1}+\frac{1}{\sqrt{n}}\bm{U})-\Phi (\Sigma^{-1})\right)\\     
          =&-n\log \det (\Sigma^{-1}+ \frac{1}{\sqrt{n}}\bm{U})+n\log \det (\Sigma^{-1})+ \sqrt{n}\mathbb{E}_{\mathbb{P}_n} \left[\bm{x}^\top  \bm{U} \bm{x}\right]\\
         &+2\eta\sqrt{n} \sum_{j=1}^d \sum_{k=1}^d \left(\frac{1}{n}\sum_{i=1}^n \vert\bm{x}^i_k\vert \right) \left(\vert [\Sigma^{-1}]_{kj}+\frac{1}{\sqrt{n}}\bm{U}_{kj}\vert -\vert [\Sigma^{-1}]_{kj}\vert\right)\\
         &+ o(1).
    \end{aligned}
\end{equation*}

Based on the proof of  Theorem \ref{asymptoticresults1}, we can have that 
\[\sqrt{n}(\widetilde{ C}-\Sigma^{-1})\Rightarrow\arg\min_{\bm{U}} W(\bm{U}),\]
where 
\begin{equation*}\begin{aligned} W(\bm{U}) &= -\mathrm{tr}(\bm{U}\mathbf{G}) +\frac{1}{2}\mathrm{tr}(\bm{U}\Sigma \bm{U}\Sigma)\\&+ 2\eta  \sum_{j=1}^d \sum_{k=1}^d  \mathbb{E}[\vert\bm{x}_k\vert] \left( \bm{U}_{jk}\mathrm{sign}([\Sigma^{-1}]_{jk})\mathbb{I}([\Sigma^{-1}]_{jk}\not = 0) +\vert  \bm{U}_{jk} \vert \mathbb{I}([\Sigma^{-1}]_{jk} = 0)\right).\end{aligned}\end{equation*}

\noindent\textbf{Case 2}: $\gamma>1/2$.
\begin{equation*}\begin{aligned} W(\bm{U}) &= -\mathrm{tr}(\bm{U}\mathbf{G}) +\frac{1}{2}\mathrm{tr}(\bm{U}\Sigma \bm{U}\Sigma).\end{aligned}\end{equation*}

\noindent\textbf{Case 3}: $0<\gamma<1/2$.
\begin{equation*}\begin{aligned} W(\bm{U}) &= \frac{1}{2}\mathrm{tr}(\bm{U}\Sigma \bm{U}\Sigma)\\&+ 2\eta  \sum_{j=1}^d \sum_{k=1}^d \mathbb{E}[\vert\bm{x}_k\vert] \left( \bm{U}_{jk}\mathrm{sign}([\Sigma^{-1}]_{jk})\mathbb{I}([\Sigma^{-1}]_{jk}\not = 0) +\vert  \bm{U}_{jk} \vert \mathbb{I}([\Sigma^{-1}]_{jk} = 0)\right).\end{aligned}\end{equation*}

Since we have $\bm{x}\sim \mathcal{N}(\bm{0},\Sigma)$, then we have $\bm{x}_k\sim \mathcal{N}(\bm{0}, \Sigma_{kk})$. In this way, 
\[\mathbb{E}[\vert\bm{x}_k\vert]= \sqrt{\frac{2\Sigma_{kk}}{\pi}}. \]
\end{proof}

\bibliographystyle{apalike}
\bibliography{JASA}
\end{document}